\pgfplotsset{compat=newest}
\newtheorem{lemma}{Lemma}
\crefname{problem}{problem}{problems}
\crefname{lemma}{lemma}{lemmas}
\newcommand{\pequiv}[3]{%
    \ifinalign@
        #1 &\equiv \left(#2, #3\right)
    \else
        #1 \equiv \left(#2, #3\right)
    \fi
}
\newcommand{\requiv}[3]{%
    \ifinalign@
        {#1} &\equiv_{#2} {#3}
    \else
        {#1} \equiv_{#2} {#3}
    \fi
}
\newcommand{\idiotbox}[2]{
\fbox{
\begin{minipage}{0.9\linewidth}
{\bf Result for {#1}:}
{#2}
\end{minipage}
}
}
\newcommand{\irule}[2]%
   {\mkern-2mu\displaystyle\frac{#1}{\vphantom{,}#2}\mkern-2mu}
\newcommand{\irulelabel}[3]
{
\mkern-2mu
\begin{array}{ll}
\displaystyle\frac{#1}{\vphantom{,}#2} & \!\!\!\! #3
\end{array}
\mkern-2mu
}
\newcommand{\bfpara}[1]{\vspace{5pt}\noindent\emph{\textbf{#1}}}
\newcommand{\set}[1]{\{ #1 \}}
\newcommand{\listappend}{+\!+}
\newcommand{\fst}{\textsf{fst}}
\newcommand{\foldl}{\textsf{foldl}}
\newcommand{\map}{\textsf{map}}
\newcommand{\filter}{\textsf{filter}}
\newcommand{\length}{\textsf{length}}
\newcommand{\ite}[3]{#1~?~#2:#3}
\newcommand{\dom}{\textsf{dom}}
\newcommand{\nil}{\mathsf{Nil}}
\newcommand{\last}{\textsf{last}}
\newcommand{\hoare}[3]{\set{#1}~#2~\set{#3}}
\newcommand{\emptymap}{\set{\ }}
\newcommand{\denot}[1]{\llbracket #1 \rrbracket}
\newcommand{\prog}{\mathcal{P}}
\newcommand{\init}{\mathcal{I}}
\newcommand{\scheme}{\mathcal{S}}
\newcommand{\rfs}{\Phi}
\newcommand{\sketch}{\Omega}
\newcommand{\context}{\Delta}
\newcommand{\rewrite}{\hookrightarrow}
\newcommand{\hole}{\Box}
\newcommand{\pp}{\nolinebreak{\bf +}\nolinebreak\hspace{-.10em}{\bf +}}
\newcommand{\axioms}{\mathcal{A}}
\newcommand{\tool}{\textsc{Opera}\xspace}
\newcommand{\toolname}{\tool}
\bfseries\color{green!40!black},
\lstdefinestyle{numbers}
{
  numbers=left,
  numberstyle=\sf,
  xleftmargin=15pt
}
\begin{document}

\title{From Batch to Stream: Automatic Generation of Online Algorithms}


\author{Ziteng Wang}
\orcid{0009-0001-8487-8093}
\affiliation{%
  \institution{University of Texas at Austin}
  \city{Austin, TX}
  \country{USA}
}
\email{ziteng@utexas.edu}

\author{Shankara Pailoor}
\orcid{0000-0002-9253-9585}
\affiliation{%
  \institution{University of Texas at Austin}
  \city{Austin, TX}
  \country{USA}
}
\email{spailoor@cs.utexas.edu}

\author{Aaryan Prakash}
\orcid{0009-0004-3142-3345}
\affiliation{%
  \institution{University of Texas at Austin}
  \city{Austin, TX}
  \country{USA}
}
\email{aaryanprakash@utexas.edu}

\author{Yuepeng Wang}
\orcid{0000-0003-3370-2431}
\affiliation{%
  \institution{Simon Fraser University}
  \city{Burnaby}
  \country{Canada}
}
\email{yuepeng@sfu.ca}

\author{Işıl Dillig}
\orcid{0000-0001-8006-1230}
\affiliation{%
  \institution{University of Texas at Austin}
  \city{Austin}
  \country{USA}
}
\email{isil@cs.utexas.edu}

\begin{CCSXML}
<ccs2012>
<concept>
<concept_id>10011007.10011074.10011092.10011782</concept_id>
<concept_desc>Software and its engineering~Automatic programming</concept_desc>
<concept_significance>500</concept_significance>
</concept>
</ccs2012>
\end{CCSXML}

\ccsdesc[500]{Software and its engineering~Automatic programming}

\keywords{Program Synthesis, Online Algorithms, Incremental Computation, Stream Processing}

\begin{abstract}
Online streaming algorithms, tailored for continuous data processing, offer substantial benefits but are often more intricate to design than their offline counterparts. This paper introduces a novel approach for automatically synthesizing online streaming algorithms from their offline versions. In particular, we propose a novel methodology, based on the notion of \emph{relational function signature (RFS)}, for deriving an online algorithm given its offline version. Then, we propose a concrete synthesis algorithm that is an instantiation of the proposed methodology. Our algorithm uses the RFS to decompose the synthesis problem into a set of independent subtasks and uses a combination of symbolic reasoning and search to solve each subproblem. We implement the proposed technique in a new tool called \tool and evaluate it on over 50 tasks spanning two domains: statistical computations and online auctions. Our results show that \tool can automatically derive the online version of the original algorithm for  98\% of the tasks. Our experiments also demonstrate that \tool significantly outperforms alternative approaches, including adaptations of SyGuS solvers to this problem as well as two of \tool's own ablations.
\end{abstract}

\maketitle

\section{Introduction}\label{sec:intro}
The increasing demand for analyzing large volumes of data has sparked considerable interest in stream processing frameworks like Apache Flink~\cite{flink}, Spark Streaming~\cite{spark-streaming}, Kafka~\cite{kafka}, and others~\cite{storm, samza}. Because streaming applications process data as it arrives in a continuous fashion, they  can derive significant advantages from using \emph{online streaming algorithms} (online algorithms for short). In contrast to \emph{offline algorithms} that receive the input data in a single batch, online algorithms are designed to process data incrementally, without requiring access to the entire data set at once. 

Despite the potential advantages of online algorithms in many scenarios, offline algorithms are often easier to design than their online counterparts~\cite{so_post1,so_post2,so_post3,so_post4,so_post5,sandia-report}.
As an example, Figure~\ref{fig:python-var-two-pass} shows the implementation of an offline algorithm for calculating statistical variance for a list of numbers. Its online version, on the other hand, is known as Welford's algorithm~\cite{welford} and, as shown in  Figure~\ref{fig:python-var-welford},  it is significantly more complex than its offline version. In particular, note that the online version takes as input several \emph{auxiliary parameters} ({\tt v}, {\tt s}, {\tt sq}, {\tt n}) and, in addition to returning the variance, the algorithm also needs to compute the updated values of these parameters.
\par
This paper proposes a new technique for automatically synthesizing online algorithms from their offline version.  At a high level, the problem addressed in this paper falls under the general umbrella of \emph{incremental computation} on which there is a significant body of work~\cite{annieliu, higherorder, ingress, rpai, linalg, ras_dp, sun_synthesizing_2023, adapton, nominaladapton, acar_2005, acar_2006, acar_2008, acar_2014}. However, as discussed in more detail later (see Section~\ref{sec:related}), most prior work in this space focuses on programming language support and runtime systems for incrementalization~\cite{acar_2005, acar_2006, acar_2008, acar_2014, adapton, nominaladapton}. There is also some prior research on \emph{generating} incremental algorithms, but existing techniques are  either domain-specific~\cite{ingress, rpai, linalg, ras_dp, sun_synthesizing_2023}, or require hand-crafted rewrite rules to derive the target program~\cite{annieliu, higherorder}. 

In contrast to existing techniques, we propose a \emph{fully automated} and \emph{general} method for synthesizing online algorithms.
Given an offline algorithm  $\prog$  over input list $xs$, our method can automatically generate its online implementation scheme $\scheme = (\init, \prog')$ consisting of an \emph{initializer} $\init$ and \emph{online algorithm} $\prog'$. Here, the initializer specifies the computation result  for an empty list, and $\prog'$  \emph{incrementally} computes the output  given \emph{only} the previous computation result and a new stream element. Our approach can automatically derive both the initializer and the online algorithm and ensures that the synthesized  scheme is semantically equivalent to its offline version.


From a technical perspective, this paper makes two key contributions. The first one is a new \emph{synthesis methodology} for deriving online schemes, and the second contribution is a \emph{concrete synthesis algorithm} that is an instantiation of this methodology. Our methodology hinges upon the concept of a \emph{relational function signature (RFS)} which relates parameters of the online algorithm to computation results in the offline version. At a high level, the RFS (which is inferred automatically) serves as a relational specification between the offline and online algorithms and drives the entire synthesis process. In particular, our methodology relies on the notion of \emph{inductiveness relative to} an RFS and can be shown to be both sound and (under certain realistic assumptions) complete. 

A second technical contribution of this paper is a new synthesis algorithm that is an instantiation of the proposed  methodology. As shown schematically in Figure~\ref{fig:overview}, our method first statically analyzes the offline program to infer a suitable relational function signature. Along with the source code of the offline program, this RFS is used to generate a program sketch of the online algorithm. Crucially, the RFS-guided synthesis methodology ensures that each unknown in this sketch can be solved  \emph{completely independently}. This yields several independent synthesis subtasks, each requiring the discovery of an expression that satisfies its specification \emph{modulo} the RFS. However, because these expressions can nevertheless be quite large,  existing synthesis techniques (e.g., based on enumerative search) struggle to synthesize such expressions that arise in realistic online algorithms.  Our  algorithm addresses this challenge by proposing a novel \emph{expression synthesis} technique that marries the power of symbolic reasoning  with the flexibility of search.


\begin{figure}[!t]
\vspace{-0.1in}
\includegraphics[scale=0.3]{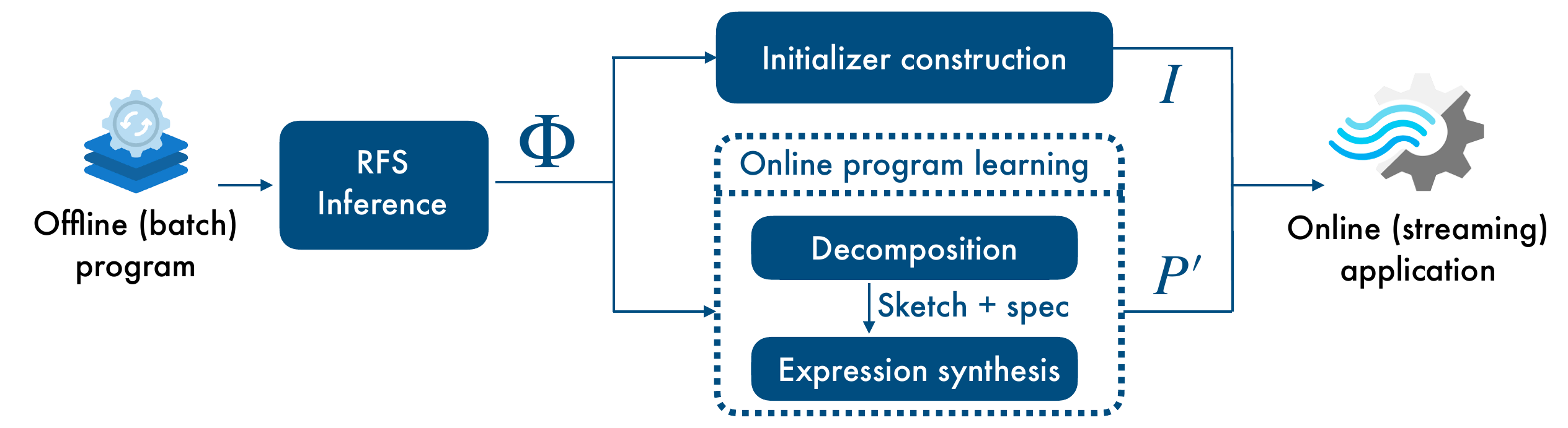}
\vspace{-0.1in}
\caption{Schematic illustration of our synthesis methodology}\label{fig:overview}
\vspace{-0.1in}
\end{figure}

{We have implemented our approach in a tool called \tool\footnote{\tool stands for Online Program gEneRAtor} and evaluated it on more than 50 offline-online conversion tasks spanning two domains (statistical algorithms and online auctions).  Our evaluation shows that \tool can automate this conversion process for $98\%$ of the tasks. We also perform a comparison against two baselines by adapting SyGuS solvers to this task and show that our proposed approach significantly outperforms these baselines: in particular, \tool can solve $2.6\times$ as many tasks as its closest competitor. Finally, we present ablation studies that quantify the relative importance of the different ingredients of our approach.}

To summarize, this paper makes the following key contributions:
\begin{itemize}[leftmargin=*]
\item We propose a novel synthesis methodology for deriving {online algorithms} from their offline counterparts which is based on the concept of \emph{relational function signatures} (RFS) and  the notion of \emph{inductiveness relative to an RFS}. 
\item We describe a concrete synthesis algorithm that is an instantiation of our proposed RFS-based methodology. This  algorithm  decomposes the overall synthesis problem into a set of completely {independent} sub-tasks and utilizes a novel expression synthesis technique that marries the power of symbolic reasoning with the flexibility of search. As shown experimentally, both of these ingredients are important for the practicality of our approach.
\item We implement our algorithm in a tool called \tool and use it to derive 
online versions of 51 offline algorithms. \tool can solve all benchmarks except one and significantly outperforms two baselines, which are problem-specific adaptations of leading SyGuS solvers.
\end{itemize}

\section{Overview}\label{sec:overview}

\begin{figure}[!t]
\centering
\begin{subfigure}[b]{0.45\linewidth}
\begin{minted}
[
fontsize=\footnotesize,
escapeinside=||,
numbersep=5pt,
linenos
]
{python}
def variance(xs):
    s = 0
    for x in xs:
        s += x
    avg = s / len(xs)

    sq = 0
    for x in xs:
        sq += (x - avg) ** 2
    return sq / len(xs)
\end{minted}
\caption{Two-pass algorithm for offline variance.}
\label{fig:python-var-two-pass}
\end{subfigure}
~\qquad\qquad~
\begin{subfigure}[b]{0.45\linewidth}
\begin{minted}
[
fontsize=\footnotesize,
escapeinside=||,
numbersep=5pt,
linenos
]
{python}
def welford(v, s, sq, n, x):
    new_s = s + x
    new_n = n + 1
    avg = new_s / new_n
    tmp = s / n
    new_sq = sq + (x - tmp) * (x - avg)
    new_v = new_sq / new_n
    return new_v, new_s, new_sq, new_n
\end{minted}
\caption{Welford's algorithm for online variance.}
\label{fig:python-var-welford}
\end{subfigure}
\vspace{-0.1in}
\caption{Offline and online algorithms for computing variance, implemented in Python.  In the online version, {\tt x} corresponds to the new stream element, and the first four are auxiliary parameters. We assume that all division operators are \emph{safe}, meaning that they produce a default value of $0$ when the denominator is $0$.}
\label{fig:python-var}
\end{figure}

In this section, we motivate our proposed technique with the aid of a motivating example for calculating statistical \emph{variance} for a list $xs$ of $n$ data points, defined as:
\begin{equation}\label{eq:variance}
\nu = \frac{\sum_{i=0}^n (xs[i] - \mu)^2}{n}
\end{equation}
where $\mu$ denotes the arithmetic mean of values in $xs$. 
Figure~\ref{fig:python-var-two-pass} shows the standard two-pass algorithm, written in Python, for implementing Eq.~\ref{eq:variance}. This algorithm first computes the mean in one pass over the input list $xs$ (lines 2-5) and then does a second pass to compute the squared differences from the mean (lines 7-10). To simplify this example and the presentation in the rest of the paper, we assume that all division operations are \emph{safe}, meaning that they produce a default value of zero if the denominator is zero.


In contrast, Figure~\ref{fig:python-var-welford} shows the Python implementation of its \emph{online version}, known as \emph{Welford's method}, based on a 1962 paper called \emph{Note on a Method for Calculating Corrected Sums of Squares and Products} \cite{welford}. The key insight behind Welford's method is the following pair of {recurrence relations} relating the new mean $\mu'$  and variance $\nu'$ to their previous values $\mu, \nu$:

\begin{equation}\label{eq:recurrence}
\begin{split}
\mu' & = \mu + \frac{(x - \mu)}{n} \\ 
\nu' & = \frac{\nu \times (n-1) + (x-\mu) \times (x-\mu')}{n}  \\
\end{split}
\end{equation}
Here, $x$ denotes the new element and $n$ denotes the total number of elements processed so far. 

Figure~\ref{fig:python-var-welford} uses these recurrence relations to compute the variance in an incremental way. Specifically, the online algorithm takes as input the new element {\tt x} to be processed and four previous computation results {\tt v}, {\tt s}, {\tt sq}, and {\tt n} where {\tt v} denotes the previously computed variance, {\tt s} denotes the sum of all previous elements, {\tt sq} denotes the previous sum of squared differences from the mean, and {\tt n} denotes the number of elements processed thus far.
It is easy to confirm that the implementation in Figure~\ref{fig:python-var-welford} computes the new variance as:
\[ 
\nu' = \frac{sq + (x - \frac{s}{n}) \times ({x - \frac{s+x}{n+1}})}{n+1}
\]
which is mathematically equivalent to Equation~\ref{eq:recurrence}. Thus, assuming that the initial auxiliary arguments of {\tt welford} are provided correctly (all zeros in this example), the online program yields the correct variance for a given stream of data points.
In the remainder of this section, we explain how our approach can automatically synthesize an implementation of Welford's algorithm given the offline version shown in Figure~\ref{fig:python-var-two-pass}.

\begin{figure}[!t]
\centering
\begin{subfigure}[b]{0.45\linewidth}
\begin{minted}
[
fontsize=\footnotesize,
escapeinside=||,
numbersep=5pt,
linenos
]
{haskell}
variance xs =
  let
          s = foldl (+) 0 xs
        avg = s / (length xs) 
    f acc x = acc + (x - avg)^2
  in (foldl f 0 xs) / (length xs)
\end{minted}
\caption{Two-pass algorithm for offline variance.}
\label{fig:ir-var-two-pass}
\end{subfigure}
~\qquad\qquad~
\begin{subfigure}[b]{0.45\linewidth}
\begin{minted}
[
fontsize=\footnotesize,
escapeinside=||,
numbersep=5pt,
linenos
]
{haskell}
welford (v, s, sq, n) x =
  let
     new_s = s + x
     new_n = n + 1
       avg = new_s / new_n
    new_sq = sq + (x - s / n) * (x - avg)
  in (new_sq / new_n, new_s, new_sq, new_n)
\end{minted}
\caption{Welford's algorithm for online variance.}
\label{fig:ir-var-welford}
\end{subfigure}
\vspace{-0.1in}
\caption{Intermediate presentation of variance computation.}
\vspace{-0.15in}
\label{fig:ir-var}
\end{figure}

\bfpara{Functional IR.}
While our tool, \tool, can take as input Python programs, it first converts the input to an intermediate functional representation that  facilitates synthesis.  Figure~\ref{fig:ir-var-two-pass} shows the corresponding intermediate representation of the two-pass variance computation, implemented using {\tt fold} operations. Specifically, line 3 computes the sum of all elements using {\tt fold} in the expected way, and line 4 computes the average as sum divided by length. Finally, line 6 computes variance using a {\tt fold} operation and the accumulator function {\tt f} declared on the previous line. As we will see later, this functional IR both simplifies presentation and also facilitates deductive reasoning. 

\begin{figure}[t]
\begin{center}
\begin{tabular}{ |c|l| }
\hline 
Parameter & Specification \\
\hline
\T{v} & \T{variance xs} \\
 \T{s} & \T{foldl (+) 0 xs} \\
 \T{sq} & \T{foldl (\acc x -> acc + (x - avg)^2) 0 xs} \\
 \T{n} & \T{length xs} \\
 \hline 
\end{tabular}
\end{center}
\vspace{-0.2in}
\caption{Relational Function Signature for Welford's algorithm} 
\vspace{-0.1in}\label{fig:rfs}
\end{figure}

\bfpara{RFS inference.}
Our approach does not take any inputs beyond the implementation of the offline algorithm, so it must first infer a suitable signature of the online program. Each online program takes as input the new element {\tt x} and the previous computation result {\tt v}; however, it may require additional inputs. To determine what auxiliary parameters may be required, \tool statically analyzes the input code to identify sub-expressions that are dependent on the input list and introduces a new parameter for each such sub-expression. In particular, Figure~\ref{fig:rfs} shows the four auxiliary parameters inferred for the variance algorithm, along with the sub-computations that they  represent. We refer to the mapping from Figure~\ref{fig:rfs} as a \emph{relational function signature} (RFS): the RFS maps each auxiliary argument of the online program to an expression $f(xs)$ in the offline program. 

\bfpara{Initializer.}
{ Recall that an online scheme consists of an initializer and an online algorithm. The initializer needs to handle the base case (i.e., empty list/stream), and it is easy to construct using the RFS. In particular, we can obtain suitable initial values for the {\tt v, s, sq, n} by evaluating the right-hand side expressions in Figure~\ref{fig:rfs} on an empty list. In this case, this yields $(0, 0, 0, 0)$ for the initializer for the online variance scheme.} 

\bfpara{RFS-guided synthesis methodology.}
Given a relational function signature like the one from Figure~\ref{fig:rfs}, \tool tries to synthesize an online program that is \emph{inductive relative to} this RFS. That is, assuming that the arguments of the online program are related to the original version as stipulated by the RFS, then the  values \emph{returned} by the online program should also satisfy the RFS. For example, consider the {\tt welford} implementation from Figure~\ref{fig:ir-var-welford}. The inductiveness of the RFS means that {\tt new\_v, new\_s, new\_sq}, and {\tt new\_n} should all satisfy the specification associated with {\tt v, s, sq, n} in Figure~\ref{fig:rfs} respectively. 

\bfpara{Syntax-directed sketch generation.}
To synthesize an online program that is inductive relative to the RFS, our approach generates a sketch from the offline program in a syntax-directed way.  For instance, for our running example, \tool generates the sketch shown in Figure~\ref{fig:sketch-var-sketch}, with the specification for each hole shown in Figure~\ref{fig:sketch-var-spec}. The key idea behind sketch generation is to  retain the reusable parts of the offline program, while replacing expressions that depend on the input list with holes. Furthermore, for each hole in the generated sketch, we can use the original expression as its specification and thereby decompose the synthesis problem into multiple independent sub-problems. For example, because the sketch shown in Figure~\ref{fig:sketch-var} contains three holes, each with its own specification, \tool can reduce the overall synthesis task to solving three \emph{independent} sub-problems.

\begin{figure}
\centering
\begin{subfigure}[b]{0.4\linewidth}
\begin{minted}
[
fontsize=\footnotesize,
escapeinside=||
]
{haskell}
online_variance (v, s, sq, n) x =
    let new_s = |$\hole_1$|
        new_n = |$\hole_2$|
        avg = s / new_n
        new_sq = |$\hole_3$| 
    in (new_sq / new_n, new_s, new_sq, new_n)
\end{minted}
\caption{Sketch.}
\label{fig:sketch-var-sketch}
\end{subfigure}
~\qquad\qquad~
\begin{subfigure}[b]{0.54\linewidth}
\small
\begin{tabular}{ |c|l| }
\hline 
{\small Unknown} & {\small Specification} \\
\hline
$\hole_1$ & \Tfn{foldl (+) 0 xs} \\
$\hole_2$ & \Tfn{length xs} \\
$\hole_3$ & \Tfn{foldl (\acc x -> acc + (x-avg)^2) 0 xs} \\
 \hline 
\end{tabular}
\caption{Specification.}
\label{fig:sketch-var-spec}
\end{subfigure}
\vspace{-0.3in}
\caption{Sketch generated by \tool for \T{variance}.}
\vspace{-0.1in}
\label{fig:sketch-var}
\end{figure}

\bfpara{Expression synthesis.}
Given the sketch shown in Figure~\ref{fig:sketch-var}, \tool aims to find an expression $e_i$ for each hole $\hole_i$ such that $e_i$ satisfies the corresponding specification for that hole.  As an example, let us consider the problem of synthesizing $\hole_1$ --- that is, we wish to find an expression over variables {\tt v, s, sq, n}, and {\tt x} such that the specification $f(xs)$ for $\hole_1$ is satisfied. But what does it even mean for an expression to satisfy its specification?  

To answer this question, consider the specification for $\hole_1$, and suppose the online algorithm has already processed elements {\tt xs} and that the new element to be processed is {\tt x}. Now,  we want to synthesize an expression $e_1$ for $\hole_1$ such that $e_1 = f(\texttt{xs ++ [x]})$ assuming that the function arguments satisfy the RFS. In this case, the RFS tells us that \texttt{s = foldl (+) 0 xs} which means we want to find an expression $e_1$ for $\hole_1$ such that $e_1$ is equivalent to {\tt foldl (+) 0 (xs ++ [x])}. Using the RFS and the semantics of {\tt fold}, we can show that the expression {\tt s + x} satisfies this specification because we have:
\[
(\texttt{s = foldl (+) 0 xs}) \  \Rightarrow \ (\texttt{s + x = foldl (+) 0 (xs ++ [x]))}
\]
Thus, \tool can infer that {\tt s + x} is a valid completion for $\hole_1$ in isolation  without having to reason about the rest of the sketch. 

But how does \tool find the expression {\tt s + x}? The simplest solution is  to perform enumerative search. While this works in simple cases,  the expressions that we need to synthesize can be quite large. For example, while the completion for $\hole_1$ is the simple expression {\tt s + x}, the completion for $\hole_3$ is a much more complex expression, namely {\tt sq + (x - s / n) * (x - avg)}. To deal with this challenge, \tool combines search with symbolic reasoning to derive expressions that are likely to be used in the target solution, as discussed later in Section~\ref{sec:expr-synthesis}.  
{
Intuitively, our key idea is to construct a logical formula in such a way that \emph{implicates} of this formula either directly correspond to the solution for the hole \emph{or} they can be used as useful building blocks when performing enumerative synthesis. As we show experimentally in Section~\ref{sec:eval}, this combination of symbolic reasoning and search is very beneficial in practice.
}

\bfpara{Summary.}
To summarize, \tool can {automatically} derive Welford's online algorithm for computing variance given {only} its standard two-pass (offline) implementation.  To do so, it first statically analyzes the offline implementation to learn a relational function signature, which drives the entire synthesis process. \tool also utilizes the offline program to generate a program sketch, which, along with the RFS, facilitates \emph{compositional} synthesis of each hole using a combination of search-based and symbolic methods. 

\section{Problem Statement}

In this section, we introduce the syntax and semantics of online and offline programs and formalize our problem statement.

\bfpara{Offline Programs.}
Figure~\ref{fig:ir-offline} shows the syntax of a simple functional language in which we express offline programs for batch processing. A program in this language takes as input a list $xs$ and evaluates an expression $E$.  Expressions include constants $c$, variables $x$, list expressions $L$, function applications  $g(E, \ldots, E)$ (where $g$ is either a built-in function or a user-defined lambda abstraction), and conditionals $\ite{E}{E}{E}$.  List expressions are formed using the standard list combinators \textsf{map}, \textsf{filter}, and \textsf{fold}, which may be arbitrarily nested. 
{Despite looking simple, this language is nevertheless Turing complete, and many batch processing programs are written in frameworks that support this style of functional programming~\cite{mapreduce, spark, flink}.}

\begin{example}\label{ex:offline}

Consider the following offline program $\lambda xs.~ \foldl(+, 0, xs) ~/~ \length(xs)$.
This program takes as input a list $xs$ of numbers and outputs their arithmetic mean. 

\end{example}
In the remainder of this paper, we assume a standard set of built-in functions such as the $+$ and \textsf{length} operators used in the previous example.  Given program $\prog$, we use the notation $\denot{\prog}_{l} = c$ to indicate that executing $\prog$ on list $l$ yields value~$c$.

\begin{figure}[!t]
\[
\begin{array}{rcl}
\text{Program } \prog &::=& \lambda xs. \ E \\
\text{ Expression } E &::=& c ~|~ x ~|~  L ~| ~ g(E, \ldots, E) ~|~ \ite{E}{E}{E} \\
\text{List Expr } L &::=& xs ~|~ \map(g, L) ~|~ \filter(g, L) ~|~ \foldl(g, E, L) \\
\text{Function } g &::=& \lambda \bar{x}. E ~|~ f \\
\end{array}
\]
\[
\begin{array}{c}
c \in \textbf{Constants} \quad x \in \textbf{Variables}  \quad xs \in \textbf{List Variables} \quad
 f \in \textbf{Built-in Functions} \\
\end{array}
\]
\vspace{-0.25in}
\caption{Syntax of the intermediate representation for offline programs.}
\vspace{-0.15in}
\label{fig:ir-offline}
\end{figure}

\begin{figure}[!t]
\[
\begin{array}{rcl}
\text{Scheme } \scheme &::=& (\init, \prog) \\
\text{Initializer } \init &::=& (c_1, \ldots,  c_n) \\
\text{Online program } \prog &::=& \lambda (y_1, \ldots, y_n). \lambda x. \ (E_1, \ldots, E_n) \\
\text{Expression } E &::=& c ~|~ x ~|  ~ y_i ~ | ~ g(E, \ldots, E) ~|~ \ite{E}{E}{E} \\
\text{Function } g &::=& \lambda \bar{z}. E ~|~ f \\
\end{array}
\]
\[
c \in \textbf{Constants} \quad x, y, z \in \textbf{Variables} \quad f \in \textbf{Built-in Functions}
\]
\vspace{-0.3in}
\caption{Syntax of the intermediate representation for online scheme.}
\vspace{-0.15in}
\label{fig:scheme-syntax}
\end{figure}

\bfpara{Online scheme.} As shown in Figure~\ref{fig:scheme-syntax}, an \emph{online implementation scheme} (or \emph{online scheme} for short) is a pair $\scheme = (\init, \prog')$  where $\init$, the  \emph{initializer}, is a tuple of constants $(c_1, \ldots, c_n)$, and $\prog'$ is a so-called \emph{online program}. 
Since the online program is expected to perform the same computation as the offline program but in an incremental fashion, it takes two arguments: (1) a tuple $(y_1, \ldots, y_n)$ which corresponds to the computational results over the previously processed stream elements, and (2) $x$, which corresponds to the new stream element to be processed. The return value of the online program is another tuple $(y_1', \ldots, y'_n)$, which corresponds to the new results after processing additional element $x$. Note that expressions in the online program are the same as their counterparts in offline programs except that  list combinators are disallowed to force incremental computation.


\begin{figure}[!t]
\[
\begin{array}{c}
\irulelabel
{}
{\denot{(\init, \prog)}_{\nil} = [\fst(\init)]}
{\textrm{(Lift-Nil)}}

\irulelabel
{\begin{array}{c}

    s, \init \vdash \prog \Downarrow s'
\end{array}}
{\denot{(\init, \prog)}_s = s'}
{\textrm{(Lift-Cons)}}

\\ \ \\

\irulelabel
{}
{\nil, \_ \vdash \prog \Downarrow \nil}
{\textrm{(S-Nil)}}

\irulelabel
{\begin{array}{c}
    \denot{\prog}_{h, c} = c' \quad
    t, c' \vdash  \prog \Downarrow s'
\end{array}}
{h:t, c \vdash \prog \Downarrow \fst(c') : s'}
{\textrm{(S-Cons)}}

\end{array}
\]
\vspace{-0.15in}
\caption{Semantics of the online scheme. $\denot{(I, P)}_s = s'$ means evaluating online scheme $(I, P)$ on stream $s$ yields stream $s'$, and $s, c \vdash (I, P) \Downarrow s'$ is an auxiliary relation, meaning online scheme $(I, P)$ evaluates to $s'$ given stream $s$ and current accumulators $c$.}
\vspace{-0.1in}
\label{fig:scheme-semantics}
\end{figure}

\bfpara{Semantics.}
 Figure~\ref{fig:scheme-semantics} presents the semantics for  online scheme $\scheme = (\init, \prog')$ using the notation $\denot{(\init, \prog')}_s = s'$, indicating that executing $\scheme$ on input stream $s$ yields another stream $s'$. To define the semantics, Figure~\ref{fig:scheme-semantics} uses an auxiliary relation of the form:
$  s, c \vdash \prog' \Downarrow s' 
$
that keeps track of the running accumulator $c$ (i.e., first argument of the online program). Given a non-empty stream $s$, the Lift-Cons rule in Figure~\ref{fig:scheme-semantics} initializes the accumulator to $\init$ and evaluates online program $\prog'$ on $s$ using the auxiliary rule S-Cons.  In particular, given a stream $s$ with head $h$ and tail $t$, S-Cons first evaluates online program $\prog'$ on $h$ and current accumulator $c$ and then recurses on tail $t$ with new accumulator values $c'$. Note that our convention in this paper is to designate the first element of the tuple returned by the online program to correspond to the result of the offline program. As such, S-Cons appends $s'$ to the first element of the tuple in $c'$.

\begin{example} \label{ex:online-prog}
The online scheme $\scheme$ for the arithmetic mean program Example~\ref{ex:offline}  consists of the initalizer $\init = (0, 0)$ and the following online program $\prog'$:
\[
\prog'((y, z), x) = \left( \left( y \cdot z + x \right) / (z + 1),~ z + 1 \right)
\]
Here, $y$ corresponds to the running mean and $z$ is the number of stream elements processed so far. Then, given the stream $s = [0, 1, 2, 3, \ldots ]$, we have:
\[
\denot{\scheme}_s = [0, 0.5, 1, 1.5, \ldots ]
\]
\end{example}

Next, we define what it means for an offline program to be equivalent to an online scheme:

\begin{definition}[\textbf{Online-Offline Equivalence}] \label{def:equiv}
Let $\prog$ be an offline program and $(\init, \prog')$ be an online scheme. We say that $(\init, \prog')$ is \emph{equivalent} to $\prog$, denoted $\prog \simeq (\init, \prog')$, if for any list $xs$ and its corresponding stream representation $xs'$, we have:
\[
 \denot{\prog}_{xs} = \last(\denot{(\init, \prog')}_{xs'})
\]
where $\last$ denotes the last element in a finite stream.
\end{definition}




\bfpara{Problem statement.}
Given an offline program $\prog$, our goal is to synthesize an online scheme $(\init, \prog')$ such that $\prog \simeq (\init, \prog')$.

\section{Methodology}\label{sec:method}

Before presenting our concrete synthesis algorithm, we first introduce the  general methodology and justify its correctness.  As stated earlier, our methodology hinges on the following notion of \emph{relational function signature (RFS)}:

\begin{definition}[\textbf{Relational Function Signature}] \label{def:rfs}
Let $\prog$ be an offline program with argument $xs$ and let  $\prog'$ be an online program with arguments $y, x$ where $y = (y_1, \ldots, y_n)$. A Relational Function Signature (RFS)  $\rfs$  maps each  $y_i$ to an offline expression $f_i(xs)$. We also write $\rfs(xs, y)$ to denote the formula $\bigwedge_{i=1}^n y_i = \rfs[y_i]$. 
\end{definition}

Intuitively, a relational function signature specifies the semantics of the additional arguments $y_1, \ldots, y_n$ of the online program in terms of expressions in the offline program. 

\begin{example}\label{ex:rfs}
Consider the offline program $\prog$ from Example~\ref{ex:offline} and the online program $\prog'$ from Example~\ref{ex:online-prog}. The relationship between $\prog$ and $\prog'$ is captured through the following RFS:
\[
\rfs[y] = \foldl(+, 0, xs) ~/~ \length(xs) \quad \quad \rfs[z] = \length(xs)
\]
Intuitively, this RFS states that the additional argument $y$ of the online program corresponds to the previous computation result and that $z$ keeps tracks of the number of list elements processed so far. 
\end{example}

Next, we introduce the notion of  \emph{inductiveness relative to a relational function signature}:



\begin{definition}[\textbf{Inductiveness relative to RFS}] \label{def:inductive-rfs}
Let $\rfs$ be an RFS between offline program $\prog$ (with argument $xs$) and  online program $\prog'$ (with arguments $y, x$).  We say that $\prog'$ is \emph{inductive relative to} $\rfs$ if and only if the following Hoare triple is valid:
\[
\hoare{\rfs(xs, {y})}{\quad {y}' := \prog'(y, x); \ xs' = xs \listappend [x] \quad \quad  } {\rfs(xs', {y}')}
\]
\end{definition}

Intuitively, an RFS is inductive if it is preserved after processing the next element in the input stream. That is, given an input stream $xs' = xs \listappend [x]$, if the RFS holds between $xs$ and $y$, then it should continue to hold between $xs'$ and $y'$ where $y'$ is the result of executing the online program $\prog'$ on the new stream element $x$ and previous computation results $y$. 

\begin{example}\label{ex:inductive-rfs}
Consider the offline and online programs from Examples~\ref{ex:offline} and~\ref{ex:online-prog}, and the RFS from Example~\ref{ex:rfs}. This RFS is inductive because:
\begin{enumerate}[leftmargin=*]
\item $\rfs[z]$ is preserved: assuming that $z$ is the length of $xs$, then $z'$ is $z+1$, which is the length of $xs \listappend [x]$. 
\item $\rfs[y]$ is preserved: assuming that $y$ is the arithmetic mean of $xs$, then $y'$ is computed as $(y \times z + x)/(z+1)$, which is indeed the arithmetic mean of $xs \listappend [x]$.
\end{enumerate}
 \end{example}

\begin{definition}[\textbf{Model of RFS}]\label{def:rfs-model} We say that an online scheme $\scheme = (\init, \prog')$ is a \emph{model} of RFS $\rfs$, denoted $\scheme \models \rfs$ if the following conditions are satisfied:
\begin{enumerate}[leftmargin=*]
\item  $\rfs(\nil, \init)$ evaluates to true, denoted $\init \models \rfs$
\item $\prog'$ is inductive relative to $\rfs$ 
\end{enumerate}
\end{definition}

\begin{example}\label{rfs-model}
Consider again the RFS $\rfs$ from Example~\ref{ex:inductive-rfs} and the online scheme from Example~\ref{ex:online-prog}. This online scheme is a model of $\rfs$ since it is inductive with respect to $\prog'$ (as shown in example~\ref{ex:inductive-rfs}), and  $\rfs[y] = \rfs[z] = 0$ on the empty list. Thus, the initializer $\init = (0, 0)$ also satisfies  $\init \models \rfs$.
\end{example}



We now state the following theorem that forms the basis of our synthesis methodology:

\begin{theorem}\label{thm:method-sound}
Let $\prog = \lambda xs. E$ be an offline program and $\scheme = (\init, \prog')$  an online scheme. Let $\rfs(xs, y)$ be an RFS between $\prog$ and $\prog'$ such that $\rfs[y_1] = E$. Then, if $\scheme \models \rfs$, we have $\prog \simeq \scheme$.
\end{theorem}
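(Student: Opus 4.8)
The plan is to prove the equivalence $\prog \simeq \scheme$ directly from Definition~\ref{def:equiv} by induction on the input stream, but with a strengthened invariant. The naive statement---that the last element of $\denot{\scheme}_{xs'}$ equals $\denot{\prog}_{xs}$---is not directly amenable to induction, because the auxiliary evaluation relation $t, c \vdash \prog' \Downarrow s'$ recurses on the tail $t$ with an \emph{updated} accumulator $c$, and we must track how this accumulator relates to the prefix already consumed. The key idea is therefore to carry the RFS itself as the inductive invariant: at every point in the recursion, the current accumulator $c$ satisfies $\rfs(p, c)$, where $p$ is the prefix of the stream processed so far.

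Concretely, I would first establish the following generalized lemma: for every nonempty list $t$, accumulator $c$, and prefix $p$, if $\rfs(p, c)$ holds and $t, c \vdash \prog' \Downarrow s'$, then $s'$ is nonempty and $\last(s') = \denot{\prog}_{p \listappend t}$. This is proved by induction on $t$. In the base case $t = [h]$, rule S-Cons (together with S-Nil) gives $s' = [\fst(c')]$ where $c' = \prog'(c, h)$; since $\prog'$ is inductive relative to $\rfs$, instantiating the Hoare triple of Definition~\ref{def:inductive-rfs} with $xs := p$, $y := c$, $x := h$ yields $\rfs(p \listappend [h], c')$, and in particular its first conjunct gives $\fst(c') = \rfs[y_1]$ evaluated on $p \listappend [h]$, which equals $\denot{\prog}_{p \listappend [h]}$ because the hypothesis $\rfs[y_1] = E$ identifies the first RFS component with the offline program's body. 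In the inductive step $t = h : t'$, the same application of inductiveness establishes $\rfs(p \listappend [h], c')$, so the induction hypothesis applies with the extended prefix $p \listappend [h]$ and accumulator $c'$; since $\last(\fst(c') : s'') = \last(s'')$ whenever $s''$ is nonempty, the claim follows.

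With the lemma in hand, the theorem splits into two cases. When $xs = \nil$, rule Lift-Nil gives $\denot{\scheme}_{\nil} = [\fst(\init)]$, and since $\init \models \rfs$ we have $\rfs(\nil, \init)$, whose first conjunct again yields $\fst(\init) = \denot{\prog}_{\nil}$ via $\rfs[y_1] = E$. When $xs$ is nonempty, rule Lift-Cons reduces $\denot{\scheme}_{xs'}$ to the stream $s'$ with $xs, \init \vdash \prog' \Downarrow s'$; applying the lemma with $p = \nil$ and $c = \init$ (valid because $\rfs(\nil, \init)$ holds by $\init \models \rfs$) gives $\last(s') = \denot{\prog}_{\nil \listappend xs} = \denot{\prog}_{xs}$, which is exactly the condition required by Definition~\ref{def:equiv}.

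I expect the main obstacle to be identifying the right generalization of the induction hypothesis: the statement must be strengthened so that it simultaneously asserts correctness of the output stream \emph{and} that the threaded accumulator satisfies the RFS relative to the consumed prefix, since it is precisely this RFS invariant that inductiveness lets us propagate one step at a time. The two remaining ingredients are then routine: the base case relies on the side condition $\rfs[y_1] = E$ to connect the first accumulator component to the offline result, and the repeated appeals to inductiveness are merely instantiations of the already-assumed Hoare triple.
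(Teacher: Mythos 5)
Your proof is correct and follows essentially the same strategy as the paper's: strengthen the induction hypothesis so that the running accumulator is asserted to satisfy $\rfs$ relative to the prefix consumed so far, propagate this invariant one element at a time via the inductiveness Hoare triple, and use $\rfs[y_1]=E$ to read off the offline result from the first component. The only difference is presentational: the paper inducts on the list by appending elements and routes the argument through the identity $\last(\denot{\scheme}_{xs'}) = \fst(\foldl(\prog',\init,xs))$, which it asserts without proof, whereas you induct directly on the operational-semantics derivation with an explicit prefix parameter, so that identity becomes a proved byproduct of your generalized lemma rather than an assumption.
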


Proofs of all theorems can be found in the extended version of the paper.

\bfpara{Synthesis methodology.}
The previous theorem forms the basis of our synthesis algorithm. In particular, our synthesis methodology consists of three key steps:
\begin{enumerate}[leftmargin=*]
\item Given the offline program $\prog = \lambda xs. E$, find a  relational function signature $\rfs$ such that $\rfs[y_1] = E$.
\item Construct an initializer $\init$ such that $\init \models \rfs$.
\item Synthesize an online program $\prog'$ that is inductive relative to $\rfs$.
\end{enumerate}

If we can synthesize such a triple $(\rfs, \init, \prog')$ satisfying properties (1)-(3) from above, Theorem~\ref{thm:method-sound} guarantees that the resulting online scheme $\scheme = (\init, \prog')$ is equivalent to input  $\prog$.  Furthermore, we can also show that this methodology is complete under certain realistic assumptions:

\begin{theorem}\label{thm:rfs_completeness}
Let $\prog = \lambda xs. E$ be an offline program and let $\scheme = (\init, \prog')$ be an online scheme such that $\prog \simeq \scheme$. If the expression $\foldl(\prog', \init, xs)$ has an inductive invariant $\lambda xs. \lambda y. \phi$ where $\phi \equiv \bigwedge_i y_i = E_i$ with $E_1 = E$, then  there exists an RFS satisfying conditions (1) -- (3) of our methodology. 
\end{theorem}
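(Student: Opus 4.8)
The plan is to read the required RFS straight off the given inductive invariant. I would define $\rfs$ by setting $\rfs[y_i] := E_i$ for every $i$, where the $E_i$ are the offline expressions occurring in $\phi \equiv \bigwedge_i y_i = E_i$. This is a well-formed RFS in the sense of Definition~\ref{def:rfs}, provided each $E_i$ is an offline expression over $xs$ (a minor well-formedness check I would discharge first), and by construction the formula $\rfs(xs, y)$ coincides with $\phi$. Condition~(1) of the methodology is then immediate: since $E_1 = E$ by hypothesis, we get $\rfs[y_1] = E_1 = E$.

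The heart of the argument is to show that this $\rfs$ also satisfies conditions~(2) and~(3), and for this I would unfold what it means for $\lambda xs.\,\lambda y.\,\phi$ to be an inductive invariant of $\foldl(\prog', \init, xs)$, reading $xs$ as the prefix already folded and $y$ as the running accumulator. Such an invariant carries two obligations. The initialization obligation states that $\phi$ holds before any element is consumed, i.e.\ for the prefix $\nil$ and accumulator $\init$; unpacking $\phi(\nil, \init)$ through the definition of $\rfs(xs, y)$ yields exactly $\rfs(\nil, \init)$, which is the condition $\init \models \rfs$ of Definition~\ref{def:rfs-model}, establishing condition~(2). The consecution obligation states that whenever $\phi$ holds for a prefix $xs$ and accumulator $y$, processing one further element $x$---which updates the accumulator to $\prog'(y, x)$ and the prefix to $xs \listappend [x]$---re-establishes $\phi$; written out, this is precisely the Hoare triple $\hoare{\rfs(xs, y)}{y' := \prog'(y, x);\ xs' = xs \listappend [x]}{\rfs(xs', y')}$ of Definition~\ref{def:inductive-rfs}, so $\prog'$ is inductive relative to $\rfs$ and condition~(3) holds. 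By Definition~\ref{def:rfs-model} these two facts give $\scheme \models \rfs$, so the constructed $\rfs$ meets all three conditions of the methodology.

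I expect the main obstacle to be purely definitional bookkeeping rather than any deep argument: carefully aligning the informal notion of an ``inductive invariant of a left fold'' with the two clauses of Definition~\ref{def:rfs-model}. Concretely, I would make the correspondence precise against the S-Nil and S-Cons evaluation rules of Figure~\ref{fig:scheme-semantics}, confirming that the invariant is parameterized over the processed prefix and the current accumulator, so that its initialization case corresponds to evaluation on $\nil$ with accumulator $\init$, and its consecution case corresponds to a single application of $\prog'$ together with the append $xs' = xs \listappend [x]$. Finally, I would remark that the hypothesis $\prog \simeq \scheme$ is used only to frame the completeness claim---it situates us in the setting of a correct online scheme admitting an invariant of the prescribed conjunction-of-equalities shape---while the technical content is supplied entirely by the invariant, with $\prog \simeq \scheme$ itself recoverable \emph{a posteriori} via Theorem~\ref{thm:method-sound}.
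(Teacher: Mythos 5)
Your proposal is correct and follows essentially the same route as the paper's proof: define $\rfs[y_i] := E_i$ directly from the invariant, obtain $\init \models \rfs$ from the initialization case, and obtain inductiveness of $\prog'$ relative to $\rfs$ from the consecution case (the paper makes the last step slightly more explicit by first stating the Hoare triple with $y' := \foldl(\prog', \init, xs \listappend [x])$, unfolding one step of $\foldl$, and then substituting $y = \foldl(\prog', \init, xs)$, but this is the same "definitional bookkeeping" you identify). No gaps.
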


{To gain some intuition about this theorem, we first observe that for $\prog$ and  $(\init, \prog')$ to be equivalent, we must have $
\prog(xs) = \fst(\foldl(\prog', \init, xs))
$
for any input list $xs$. Hence, at the very least, we must have $y_1 = E$ as an invariant of $\prog'$, where $E$ is the body of the offline program. However, since it may not be an \emph{inductive} invariant, we may need to logically strengthen it to make it inductive.  The theorem states that, as long as the required strengthening is of the form $\bigwedge_{i=2}^n y_i = E_i$ (where $y_i$'s are the auxilary arguments of the online program and $E_i$ is an offline expression), then the synthesis methodology is also complete. This is a very mild assumption that also underlies other prior work on incremental computation \cite{annieliu, ingress}. } Intuitively, we can find an inductive invariant because online programs maintain an auxiliary state that is always equivalent to some computation result, so the invariant can be expressed as a conjunction of equalities.

\section{Synthesis Algorithm}\label{sec:synthesis}

In this section, we describe our synthesis algorithm based on the methodology introduced in the previous section. 

\subsection{Top-Level Algorithm}

\begin{figure}[!t]
\begin{algorithm}[H]
\caption{Online Scheme Synthesis}
\label{algo:toplevel}
\begin{algorithmic}[1]
\Procedure{\textsc{Synthesize}}{$\prog$}
\vspace{2pt}
\Statex \textbf{Input:} An offline program $\prog$
\Statex \textbf{Output:} An equivalent online scheme $(\init, \prog')$
\State $\rfs \gets \textsc{ConstructRFS}(\prog)$
\State $\init \gets \mathsf{Model}(\rfs[xs \mapsto \nil])$
\State $\prog' \gets \textsc{SynthesizeOnlineProg}(\prog, \rfs)$

\State\Return $(\init, \prog')$
\vspace{2pt}
\EndProcedure
\end{algorithmic}
\end{algorithm}
\vspace{-0.3in}
\end{figure}

Our top level synthesis procedure is presented in Algorithm~\ref{algo:toplevel} and follows the methodology from Section~\ref{sec:method}. Specifically, it first constructs an RFS by analyzing the offline program  (line 2). It then synthesizes the initializer by replacing each occurrence of $xs$ in $\rfs$ with $\nil$ and obtaining a model of the resulting formula (line 3). Finally,  it invokes the {\sc SynthesizeOnlineProg} procedure to construct an online program $\prog'$ such that $\prog'$ is inductive relative to $\rfs$ (line 4). 

\begin{figure}[!t]
\vspace{-10pt}
\begin{algorithm}[H]
\caption{Learning RFS}
\label{algo:rfs}
\begin{algorithmic}[1]
\Procedure{\textsc{ConstructRFS}}{$\prog$}
\vspace{2pt}
\Statex \textbf{Input:} An offline program $\prog = \lambda xs. E$
\Statex \textbf{Output:} A relational function signature $\rfs$
\State $\rfs \gets \set{y_1 \mapsto E}$
\For {$e_2, \ldots, e_n \in \mathsf{ListExpr}(E)$}
    \State $\rfs \gets \rfs[y_i \mapsto e_i]$
\EndFor
\State \Return $\rfs$
\vspace{2pt}
\EndProcedure
\end{algorithmic}
\end{algorithm}
\vspace{-0.3in}
\end{figure}

Algorithm~\ref{algo:rfs} presents our technique for constructing a relational function signature. The key idea underlying {\sc ConstructRFS} is the following: For any expression $e$ of $\prog$ that performs some operation over the input list $xs$, the online program \emph{may} require an additional argument to store the previous computation result. Thus, {\sc ConstructRFS} iterates over list expressions $e_2, \ldots, e_n$ in the offline program and introduces a new argument $y_i$ for each $e_i$, 
with the corresponding mapping $\rfs[y_i] = e_i$. Here (and in the remainder of the paper), we use the term ``list expression" to mean any expression that has $xs$ as a child in the abstract syntax tree of $\prog$. Since our convention is to store the result of the offline program in $y_1$, note that line 2 of Algorithm~\ref{algo:rfs} maps $y_1$ to $E$, which is the body of the offline program.

\bfpara{Remark.}
The {\sc ConstructRFS} procedure may end up introducing more accumulators (i.e., auxiliary parameters) than necessary. If the synthesized online program does not end up using them, our implementation removes such unnecessary variables from the signature of the online program in a subsequent post-processing step.

\begin{example}\label{ex:rfs}
Consider the offline program from Example~\ref{ex:offline}. Our algorithm produces the following relational function signature:
\[
\{ y_1 \mapsto \foldl(+, 0, xs) ~/~ \length(xs), y_2 \mapsto   \length(xs), y_3 \mapsto \foldl(+, 0, xs) \}
\]
\end{example}

\subsection{Synthesis of Online Programs} \label{subsec:syn_online_prog}
\begin{figure}[!t]
\begin{algorithm}[H]
\caption{Online Program Synthesis}
\label{algo:online-prog}
\begin{algorithmic}[1]
\Procedure{\textsc{SynthesizeOnlineProg}}{$\prog, \rfs$}
\vspace{2pt}
\Statex \textbf{Input:} Offline program $\prog$, relational function signature $\rfs$
\Statex \textbf{Output:} An online program $\prog'$ such that $\rfs$ is inductive with respect to $\prog'$
\State $\prog', \context \gets \textsc{Decompose}(\rfs, \prog)$ 
\For{\textbf{each} $h \in \textsf{Holes}(\prog')$}
    \State $E \gets \textsc{SynthesizeExpr}(\rfs, \context[h])$
    \If{$E = \bot$}  \Return $\bot$
    \EndIf
    \State $\prog' \gets \prog'[E / h]$
\EndFor
\State\Return $\prog'$
\vspace{2pt}
\EndProcedure
\end{algorithmic}
\end{algorithm}
\vspace{-0.3in}
\end{figure}

Algorithm~\ref{algo:online-prog} presents our approach for synthesizing online programs. As mentioned in Section~\ref{sec:intro}, the main idea is to \emph{decompose}  the synthesis task into several subproblems that can be solved \emph{completely independently}. In particular, the algorithm performs this decomposition by first generating a program sketch, where each hole represents an independent synthesis task with its own specification (line 2). The loop in lines 4--6 then solves each  sub-problem by calling the {\sc SynthesizeExpr} procedure (line 4). In the remainder of this section, we describe our decomposition technique and expression synthesis algorithm in more detail.

\subsubsection{\bf Decomposition}

\begin{figure}[!t]
\small
\[
\begin{array}{c}

\irulelabel
{\begin{array}{c}
\dom(\rfs) = \set{y_1, \ldots, y_n} \\ 
\rfs \vdash \rfs[y_1] \rewrite \sketch_1, \context_1 \quad \ldots \quad
\rfs \vdash \rfs[y_n] \rewrite \sketch_n, \context_n \\
\end{array}
}
{\rfs \vdash \lambda xs. E \rewrite \lambda (y_1, \ldots, y_n). \lambda x. (\sketch_1, \ldots, \sketch_n),~ \context_1 \cup \ldots \cup \context_n }
{\textsc{(Prog)}}

\quad

\irulelabel
{\begin{array}{c}
\textsf{LeafExpr}(E) \\
\textsf{Type}(E) \neq \textsf{List}
\end{array}}
{\rfs \vdash E \rewrite E, \emptymap}
{\textsc{(Leaf)}}

\\ \ \\

\irulelabel
{\Box = \textsf{Hole}(L)}
{\rfs \vdash L \rewrite \Box, \set{\Box \mapsto L}}
{\textsc{(List)}}

\quad

\irulelabel
{\begin{array}{c}
\rfs \vdash E_1 \rewrite \sketch_1, \context_1 \quad
\ldots \quad
\rfs \vdash E_n \rewrite \sketch_n, \context_n \\
\end{array}}
{\rfs \vdash g(E_1, \ldots, E_n) \rewrite g(\sketch_1, \ldots, \sketch_n),~ \context_1 \cup \ldots \cup \context_n}
{\textsc{(Func)}}

\\ \ \\

\irulelabel
{\begin{array}{c}
\rfs \vdash E_1 \rewrite \sketch_1, \context_1 \quad
\rfs \vdash E_2 \rewrite \sketch_2, \context_2 \quad
\rfs \vdash E_3 \rewrite \sketch_3, \context_3 \\
\end{array}}
{\rfs \vdash \ite{E_1}{E_2}{E_3} \rewrite \ite{\sketch_1}{\sketch_2}{\sketch_3},~ \context_1 \cup \context_2 \cup \context_3}
{\textsc{(ITE)}}

\end{array}
\]
\vspace{-0.1in}
\caption{Rules for decomposition.}
\label{fig:decomp}
\vspace{-0.1in}
\end{figure}

We present our decomposition technique using inference rules of the form $\rfs \vdash E \rewrite \sketch, \context $ where $\sketch$ is an expression with holes (i.e., \emph{sketch}) and $\context$ is a mapping from each hole to its corresponding specification. Here, the specification is an expression in the offline program, and the goal of the subsequent synthesis task is to generate an \emph{online expression} $e$ for each hole $h$ such that $e$ is equivalent to $\context[h]$ modulo the RFS. 

Before we go into the details of our sketch generation procedure, we first provide some high-level intuition. The key idea is to replace expressions that directly operate over the input list  with holes but reuse the general high-level structure of the offline algorithm.  For example, consider the expression $e$ given by
\texttt{foldl(+, 0, xs) / length(xs)}. If we have a way of incrementally computing \texttt{foldl(+, 0, xs)} and \texttt{length(xs)} using expressions $e_1$ and $e_2$ respectively, we can also incrementally compute $e$ as $e_1/e_2$. Thus, our decomposition technique implicitly assumes that the online program can be obtained by \emph{composing} incremental computations over list expressions using operators that already appear in the offline program. While this assumption could \emph{in principle} be violated (thereby causing our synthesis algorithm to lose completeness), we have, \emph{in practice}, not encountered any cases violating this assumption.

Figure~\ref{fig:decomp} presents our decomposition algorithm as inference rules. The first rule labeled \textsc{Prog} utilizes the RFS to generate a sketch for the entire program. In particular, if there are $n$ variables in the domain of the RFS, the body of the online program consists of an $n$-ary tuple $(\sketch_1, \ldots, \sketch_n)$ where each sketch $\sketch_i$ corresponds to $\rfs[y_i]$.  The next rule, labeled {\sc Leaf}, is used to ``copy over" shared expressions that belong to the syntax of both online and offline programs. The rule labeled {\sc List} introduces holes: Since list expressions $L$ are disallowed in online programs, they must be synthesized from scratch, and the resulting expression must be semantically equivalent to expression $L$ in the offline program. Thus,  this rule states that the specification for the introduced hole is $L$. The remaining rules are used to recursively construct sketches for compound expressions.  For example, given an expression $g(E_1, \ldots, E_n)$ the {\sc Func} rule constructs a sketch $g(\sketch_1, \ldots, \sketch_n)$ by recursively constructing sketches for each $E_i$. 

\begin{example}
Consider the RFS from Example~\ref{ex:rfs} and the offline program from Example~\ref{ex:offline}. Our decomposition procedure generates the following program sketch for the online program:
\[
\lambda (y_1, y_2, y_3). \lambda x. \ (\hole_1 / \hole_2, \hole_2, \hole_1)
\]
and the specifications of each hole are as follows:
\[
\{ \hole_1 \mapsto \foldl(+, 0, xs), \ \hole_2 \mapsto \length(xs) \}
\]
Thus, the decomposition produces two independent synthesis tasks.
\end{example}

\subsubsection{\bf Expression Synthesis}\label{sec:expr-synthesis}

The goal of expression synthesis is to find an online expression $E'$ that is equivalent to an offline expression $E$ modulo the RFS. Thus, before we discuss our synthesis algorithm, we first introduce the concept of \emph{equivalence modulo RFS}:

\begin{definition}{\bf (Equivalence modulo RFS)}\label{def:eq-expr}
We say that an offline expression $E$ is equivalent to online expression $E'$ modulo the RFS iff: 
\[
\rfs(xs, y) \models E' = E[(xs\pp[x])/xs]
\]
\end{definition}
In other words, an online expression $E'$ is equivalent to $E$ if we can show that $E'=E[(xs\pp[x])/xs]$ under the assumption that the RFS holds. To gain further intuition about this definition, recall that $xs$ denotes the previously processed elements and $x$ is the new element, so the elements processed so far correspond to the list $xs \pp [x]$. This is why $E'$ should be equivalent to $E$ after substituting $xs$ (the argument of offline program) with $xs\pp [x]$. Furthermore, since the RFS gives the mapping between the auxiliary variables of the online program and sub-expressions in the offline program,  equality between $E$ and $E'$ only makes sense when we utilize the mapping given by the RFS.

\begin{example}\label{ex:eq-expr}
Consider the RFS $\rfs$ from Example~\ref{ex:inductive-rfs}.  Then, the online expression $(y_1 \times y_2)+x$ is equivalent to $\foldl(+, 0, xs)$ modulo $\rfs$ because:
\[
y_1 = \foldl(+, 0, xs)~/~\length(xs) \land y_2 = \length(xs) \models \foldl(+, 0, xs\pp[x]) = (y_1 \times y_2) + x
\]
\end{example}

\begin{figure}[!t]
\[
\begin{array}{lll}
\foldl(g, c, xs\pp [x]) & = &  g(\foldl(g, c, xs), x) \\
\map(g, xs\pp[x]) & = & \map(g, xs) \pp [g(x)] \\
\filter(g, xs \pp [x]) & = & g(x) \ ? \  \filter(g, xs) \pp [x] : \filter(g, xs)

\end{array}
\]
\vspace{-0.2in}
\caption{Axioms involving higher-order combinators.}
\label{fig:axioms}
\vspace{-0.2in}
\end{figure}

\begin{figure}[!t]
\begin{algorithm}[H]
\caption{Expression synthesis algorithm}
\label{algo:expr-synth}
\begin{algorithmic}[1]
\Procedure{\textsc{SynthesizeExpr}}{$\rfs, E$}
\vspace{2pt}
\Statex \textbf{Input:} Relational function signature $\rfs$, offline expression $E$
\Statex \textbf{Output:} Online expression $E'$
\State $\chi \gets \textsc{FindImplicate}(\rfs, E[(xs\pp[x])/xs] = \hole)$
\If{$\chi$ \textsf{matches} $\hole = E'$} \State \Return $E'$
\Else
\State $\theta \gets {\textsc{MineExpressions}}(\rfs, E)$
\State \Return \textsf{EnumSynthesize}($\rfs, E, \theta$)
\EndIf
\vspace{2pt}
\EndProcedure
\vspace{2pt}

\Procedure{\textsc{FindImplicate}}{$\rfs, T$}
\vspace{2pt}
\Statex \textbf{Input:} Relational function signature $\rfs$, implicate template $T$
\Statex \textbf{Output:} Implicate of $\rfs$
\State $\axioms \gets \textsc{AddAxioms}(\Phi)$
\State $\psi \gets \rfs \land T \land \bigwedge_i \axioms_i$
\State $(\psi', V) \gets \mathsf{ReplaceListExprs}(\psi)$
\State \Return $\mathsf{ElimQuantifier}(\exists V. \psi')$
\vspace{2pt}
\EndProcedure

\vspace{2pt}

\Procedure{\textsc{MineExpressions}}{$\rfs, E$}
\vspace{2pt}
\Statex \textbf{Input:} RFS $\rfs$, offline expression $E$, unrolling depth $k$ (hyperparameter)
\Statex \textbf{Output:} Set of terms that are likely to be useful in enumerative synthesis
\State $\varphi \gets \textsf{True};$  \quad $(E', V) \gets \mathsf{Unroll}(E, k+1)$
\For{$(y_i, E_i) \in \rfs$}
\State $(E_i', V_i) \gets \mathsf{Unroll}(E_i, k);$
$\quad \varphi \gets \varphi \land (y_i = E_i')$;  $\quad V \gets V \cup V_i$
\EndFor

\State $\psi \gets \mathsf{ElimQuantifier}(\exists V. \ (\varphi \land \hole = E'))$
\State \Return $\{ \mathsf{Templatize}(t) \ | \ (\hole = t) \in \mathsf{Literals}(\psi)  \} $
\vspace{2pt}
\EndProcedure
\end{algorithmic}
\end{algorithm}
\vspace{-0.2in}
\end{figure}

Algorithm~\ref{algo:expr-synth} presents our expression synthesis algorithm for finding an online expression $E'$ that is equivalent to offline expression $E$ modulo the RFS $\rfs$. The basic idea is to use symbolic reasoning to find an \emph{implicate} of $\rfs$ that is of the form $E' = E[(xs\pp [x])/xs]$  where $E'$ is a term over variables $x, y_1, \ldots, y_n$. By definition, an implicate of a formula is implied by it; thus, if we can find an implicate of $\rfs$ of this form, it satisfies Definition~\ref{def:eq-expr} by construction. However, the key challenge is that both the RFS $\rfs$ and offline expression $E$ contain higher-order combinators such as $\foldl$ and $\map$, so it is not immediately obvious how to use an SMT solver to find a suitable implicate.

Our core approach to solving this problem is summarized in the {\sc FindImplicate} procedure in Algorithm~\ref{algo:expr-synth}. This algorithm takes as input the RFS $\rfs$ and an implicate template $T$, and computes an instantiation of $T$ that is implied by $\rfs$ as follows:
\begin{enumerate}[leftmargin=*]
\item  First, it adds axioms that relate the result of applying a higher-order combinator to $xs\pp[x]$ to the result of applying the combinator to $xs$  (line 9). Figure~\ref{fig:axioms} shows a set of axiom schema that are instantiated based on the specific terms used in $\rfs$.
\item Next, it  constructs a formula that is the conjunction of $\rfs, T$, and all the axioms $\axioms$ generated in the previous line. 
\item Third, it replaces each list expression with a fresh variable by calling the \textsf{ReplaceListExprs} procedure at line 11. The idea is to eliminate higher-order combinators like map and fold after adding all relevant axioms about them. Here, \textsf{ReplaceListExprs} returns a new formula $\psi'$ and a set of variables $V$ introduced by this transformation. 
\item Finally, it uses  quantifier elimination  to obtain a formula over variables $x, y_1, \ldots, y_n$. 
\end{enumerate}
We illustrate this procedure through a simple example: 

\begin{example}
Consider the RFS $\rfs$ and offline expression  $E$ from Example~\ref{ex:eq-expr} where:
\[
\begin{array}{lll}
\rfs & \equiv &  y_1 = \foldl(+, 0, xs)~/~\length(xs) \land y_2 = \length(xs) \\
T & \equiv & \hole = \foldl(+, 0, xs\pp [x])
\end{array}
\]
For this example, there is only one relevant axiom, namely:
\[
\foldl(+, 0, xs\pp[x]) = \foldl(+, 0, xs) + x
\]
After replacing list expressions with fresh variables, we obtain the following formula:
\[
y_1 = v_1/v_2 \land y_2 = v_2 \land v_3 = v_1 + x\land \hole = v_3 
\]
where $v_1, v_2, v_3$ represent $\foldl(+, 0, xs)$, $\length(xs)$, and $\foldl(+, 0, xs\pp[x])$ respectively. Finally, after eliminating $v_1, v_2, v_3$ from this formula, we obtain:
\[
\hole = (y_1 \times y_2) + x
\]
Hence, given the expression $\foldl(+, 0, xs)$, {\sc SynthesizeExpr} returns $(y_1 \times y_2) + x$ as the equivalent online expression.
\end{example}

If  {\sc FindImplicate} returns an equality of the form $\hole = E'$ (line 3 in Algorithm~\ref{algo:expr-synth}), then $E'$ is the equivalent online expression for $E$, so the algorithm returns $E'$ at line 4. However, {\sc FindImplicate} may not always return such a formula because, for example, the added axioms may not be sufficient to adequately capture the semantics of all list expressions. In this case, the {\sc SynthesizeExpr} algorithm falls back on enumerative synthesis (line 7) but leverages the insights from {\sc FindImplicate} to \emph{mine} useful expressions that can be used as building blocks. In particular, given the RFS $\rfs$ and offline expression $E$, the {\sc MineExpressions} procedure returns a set of \emph{templatized} expressions that are likely to be useful for enumerative synthesis.

The basic idea behind {\sc MineExpressions} is the same as {\sc FindImplicates}; however, rather than adding axioms about the higher-order combinators, it simply \emph{unrolls} them: That is, given an offline expression $E$ over list $xs$, the procedure \textsf{Unroll} instantiates $xs$ with a symbolic list of size $k$ and symbolically executes $E$ on this list.  Thus, the formula $\varphi$ in the {\sc MineExpressions} algorithm corresponds to an unrolled version of $\rfs$ on lists of size $k$, and $E'$ corresponds to an unrolled version of $E$ on a list of size $k+1$. As in the {\sc FindImplicates} procedure, we use quantifier elimination to find an implicate of the formula $\varphi \land \hole = E'$ over variables $x, y_1, \ldots, y_n$. However, because $\varphi$ and $E'$ are essentially under-approximations of $\rfs$ and $E$ respectively, the resulting formula may  not be a valid implicate. Thus, our synthesis algorithm simply mines \emph{templatized expressions} from the resulting formula by replacing constants, which are typically the root cause for the formula not being a valid implicate,  with holes. These templatized expressions are then added to the grammar for online expressions to expedite enumerative synthesis at line 7. This \textsf{EnumSynthesize} procedure is based on basic top-down enumerative search and checks correctness using testing (see  Section~\ref{sec:impl}).

\begin{example}
{Consider the RFS $\Phi$ and offline expression $\rfs[\mathrm{sq}]$ from Figure~\ref{fig:rfs} in Section~\ref{sec:overview} where:}
\begin{align*}
    \rfs &\equiv \textrm{sq} = \foldl(\lambda c. \lambda x.~ c+(x-\textrm{avg})^2, 0, xs)  \ldots \\
    T &\equiv \hole = \foldl(\lambda c. \lambda x.~ c+(x-\textrm{avg}')^2, 0, xs\pp[x])
\end{align*}
For this example, there is only one relevant axiom, namely
\[
\foldl(\lambda c. \lambda x.~ c+(x-\textrm{avg})^2, 0, xs\pp[x]) = \foldl(\lambda c. \lambda x.~ c+(x-\textrm{avg})^2, 0, xs) + (x-\textrm{avg})^2
\]
After replacing list expressions with fresh variables, we obtain the following formula for $\rfs \wedge T$:
\[
 \textrm{sq} = v_2 \wedge \hole = v_3,
\]
After eliminating the fresh variables, we obtain \emph{true} as an implicate, which is not useful. Hence,  in line 6 of Algorithm \ref{algo:expr-synth}, we mine expressions by instantiating $xs$ with a symbolic list of size $k$ and symbolically execute $\phi[\textrm{sq}]$ on this list.
When $k = 3$ and $xs = [x_1, x_2, x_3]$, we have the following after executing line 14-16 of Algorithm \ref{algo:expr-synth}:
\begin{align*}
    \rfs &\equiv \textit{sq} = (x_1 - \textrm{avg})^2 + (x_2 - \textrm{avg})^2 + (x_3 - \textrm{avg})^2 \wedge n = 3 \wedge \ldots \\
    T &\equiv \hole = (x_1 - \textrm{avg}')^2 + (x_2 - \textrm{avg}')^2 + (x_3 - \textrm{avg}')^2 + (x - \textrm{avg}')^2 \\
    \varphi &\equiv \exists x_0, x_1, x_2. ~\rfs \wedge T
\end{align*}
where we introduced $\textrm{avg} = \frac{1}{3}(x_1 + x_2 + x_3)$ and $\textrm{avg}' = \frac{1}{4}(x_1 + x_2 + x_3 + x)$ to simplify presentation.
Finally, running quantifier elimination gives the following expression:
\[
\hole = \frac{1}{12}(s^2 - 6\cdot s\cdot x + 12\cdot\textrm{sq} + 9\cdot x^2),
\]

After replacing constants with holes, we obtain the following template:
\[
\frac{s^2 - \ ??_1*s*x + \ ??_2*\textrm{sq} \  + \ ??_3*x^2}{??_4}
\]

Note that the desired expression, which is:

\[
\frac{s^2 - (2n)*s*x + (n(n+1))*\textrm{sq} + (n^2)*x^2}{n(n+1)}
\]
can be obtained from this template by replacing the unknowns with  expressions $2n$, $n(n+1)$,  $n^2$, and $n(n+1)$ respectively.
Hence, obtaining such templates via {\sc MineExpressions} ends up significantly speeding up enumerative synthesis.  

\end{example}

\begin{theorem} \label{thm:synthesizeexpr_soundness}
If {\sc SynthesizeExpr}($\rfs, E$) returns $E'$, then $E'$ is indeed equivalent to $E$ modulo $\rfs$.
\end{theorem}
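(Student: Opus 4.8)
The plan is to prove the statement by a case analysis on the two points at which \textsc{SynthesizeExpr} can return a non-$\bot$ result, namely line~4 (the \textsc{FindImplicate} branch) and line~7 (the \textsc{EnumSynthesize} branch). In both cases the obligation is fixed by Definition~\ref{def:eq-expr}: I must establish $\rfs(xs, y) \models E' = E[(xs\pp[x])/xs]$ for the returned expression $E'$. Throughout I will read the template fed to \textsc{FindImplicate} as $T \equiv (\hole = E[(xs\pp[x])/xs])$, so that $\hole$ is simply a logical name for the offline computation on the extended list.

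For the \textsc{FindImplicate} branch, the core is to verify that the formula $\chi$ returned by \textsc{FindImplicate} is a genuine \emph{implicate} of $\rfs$, i.e.\ that it is entailed by $\rfs$ under the intended semantics of the combinators. I would argue this through the chain of transformations in the procedure. First, each axiom instance added on line~9 is an instance of one of the schemata in Figure~\ref{fig:axioms}, each of which is a semantically valid identity for $\foldl$, $\map$, $\filter$; hence conjoining them does not strengthen entailment, and $\rfs \land T \models \axioms_i$ trivially. Second, I would show that \textsf{ReplaceListExprs} only weakens the formula in the sound direction: if $\psi = \rfs \land T \land \bigwedge_i \axioms_i$ and $\psi'$ is obtained by replacing each distinct list subexpression $L_j$ by a fresh $v_j \in V$, then substituting the $L_j$ back recovers $\psi$ exactly, so any model $M$ of $\psi$ extends to a model of $\psi'$ by taking $v_j := \denot{L_j}_M$; therefore $\psi \models \exists V.\,\psi'$. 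Finally, since \textsf{ElimQuantifier} returns a formula logically equivalent to $\exists V.\,\psi'$, we get $\rfs \land T \models \chi$. When the guard on line~2 succeeds, $\chi$ has the form $\hole = E'$ with $E'$ a legal online expression over $x, y_1, \ldots, y_n$ only (no residual $\hole$, $V$, or $xs$), and substituting the definition $\hole = E[(xs\pp[x])/xs]$ from $T$ yields exactly $\rfs \models E' = E[(xs\pp[x])/xs]$, as required.

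For the \textsc{EnumSynthesize} branch, soundness reduces entirely to the contract of the enumerative search subroutine: \textsc{EnumSynthesize} returns a candidate $E'$ only after checking it against its specification, which is precisely equivalence modulo $\rfs$ in the sense of Definition~\ref{def:eq-expr}. The expressions mined by \textsc{MineExpressions} (line~6) are used only to populate the grammar over which the search ranges; because they affect only which candidates are \emph{considered} and never the equivalence check that gates a return, the possibly-invalid templates (the procedure deliberately under-approximates by unrolling rather than adding valid axioms) cannot compromise soundness. Thus every value returned on line~7 already satisfies the desired property by construction.

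I expect the main obstacle to be the \textsc{FindImplicate} case, and specifically the soundness of the list-expression abstraction together with the validity of the instantiated axiom schemata. The delicate point is the \emph{direction} of entailment: abstracting list expressions as fresh uninterpreted variables is an over-approximation, so one must check that it preserves the implicate (i.e.\ ``implied-by'') property rather than the dual, which is exactly the one-sided argument $\psi \models \exists V.\,\psi'$ sketched above. The remaining ingredients—soundness of each axiom in Figure~\ref{fig:axioms} and correctness of quantifier elimination—are standard and I would invoke them as such, devoting the bulk of the detailed write-up to the abstraction argument and to confirming that the syntactic ``matches $\hole = E'$'' test guarantees $E'$ is a well-formed online expression free of eliminated symbols.
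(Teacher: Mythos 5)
Your proof takes essentially the same route as the paper's: a case split on the two return points, with Case~1 resting on the soundness of the axiom/abstraction/quantifier-elimination chain inside \textsc{FindImplicate} and Case~2 resting on the equivalence check that gates \textsc{EnumSynthesize}. Your treatment of the \textsf{ReplaceListExprs} abstraction (the one-sided entailment $\psi \models \exists V.\,\psi'$ obtained by interpreting each fresh variable as the denotation of the list expression it replaced) is in fact more explicit than the paper's, which simply appeals to the soundness of the quantifier-elimination procedure; otherwise the two arguments coincide.
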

Finally, we conclude this section by stating the soundness of the end-to-end synthesis procedure:
\begin{theorem} \label{thm:synthesize_soundness}
If {\sc Synthesize}($\prog$) returns $(\init, \prog')$, then we have $\prog \simeq (\init, \prog')$.
\end{theorem}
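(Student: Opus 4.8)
The plan is to reduce the claim to Theorem~\ref{thm:method-sound}, which tells us that $\prog \simeq (\init, \prog')$ holds as soon as the RFS $\rfs$ computed by the algorithm satisfies $\rfs[y_1] = E$ and the returned scheme is a model of $\rfs$, i.e., $(\init, \prog') \models \rfs$. By Definition~\ref{def:rfs-model}, the model obligation splits into (i) $\init \models \rfs$ and (ii) $\prog'$ is inductive relative to $\rfs$. Throughout I would assume the top-level call of Algorithm~\ref{algo:toplevel} does not return $\bot$ (otherwise there is nothing to prove), which in particular means every call to \textsc{SynthesizeExpr} inside \textsc{SynthesizeOnlineProg} returned a genuine online expression. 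So it remains to discharge three facts: $\rfs[y_1] = E$, obligation (i), and obligation (ii).

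The first two are immediate from the code. \textsc{ConstructRFS} (Algorithm~\ref{algo:rfs}) initializes $\rfs$ with the entry $y_1 \mapsto E$ at line~2 and only ever adds entries for fresh variables $y_2, \ldots, y_n$, so $\rfs[y_1] = E$ by construction. For obligation (i), line~3 of Algorithm~\ref{algo:toplevel} sets $\init = \mathsf{Model}(\rfs[xs \mapsto \nil])$. Unfolding the abbreviation of Definition~\ref{def:rfs}, the formula $\rfs(\nil, \init)$ is exactly $\bigwedge_i \init_i = \rfs[y_i][xs \mapsto \nil]$, which is precisely the formula of which $\init$ is a model; hence $\init \models \rfs$ in the sense of Definition~\ref{def:rfs-model}.

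The crux is obligation (ii), which I would establish with a compositional lemma proved by structural induction on the decomposition derivation $\rfs \vdash E \rewrite \sketch, \context$ of Figure~\ref{fig:decomp}. The lemma states: if every hole $h$ occurring in $\sketch$ is replaced by an online expression $E_h$ that is equivalent to $\context[h]$ modulo $\rfs$ (Definition~\ref{def:eq-expr}), then the hole-free expression $\sketch[E_h/h]$ is equivalent to $E$ modulo $\rfs$. The base case \textsc{Leaf} is trivial because the copied-over leaf contains no occurrence of $xs$, so the substitution $[(xs\pp[x])/xs]$ acts as the identity. The base case \textsc{List} is exactly where Theorem~\ref{thm:synthesizeexpr_soundness} is used: the single hole has specification $L$, and its filler $E_h = \textsc{SynthesizeExpr}(\rfs, L)$ is by that theorem equivalent to $L$ modulo $\rfs$. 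The inductive cases \textsc{Func} and \textsc{ITE} follow because equivalence modulo $\rfs$ is a \emph{congruence}: substitution distributes syntactically over function application and conditionals, so under any model of $\rfs$ each recursively filled subexpression equals the corresponding substituted offline subexpression, and applying the same operator to provably-equal arguments yields provably-equal results. Here I would also note that the holes produced by distinct premises are fresh and disjoint, so the independent per-hole assignments compose into a single well-defined substitution.

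Finally, I would apply this lemma at the top-level \textsc{Prog} rule, whose conclusion builds the online body $(\sketch_1, \ldots, \sketch_n)$ from premises $\rfs \vdash \rfs[y_i] \rewrite \sketch_i, \context_i$, while the loop of Algorithm~\ref{algo:online-prog} fills every hole with a \textsc{SynthesizeExpr} result. By the lemma, each filled component $E'_i$ is equivalent to $\rfs[y_i]$ modulo $\rfs$, i.e., $\rfs(xs, y) \models E'_i = \rfs[y_i][(xs\pp[x])/xs]$. Reading $y'_i := E'_i$ as the $i$-th component of $\prog'(y,x)$ and $xs' = xs \pp [x]$, this is exactly $\rfs(xs, y) \models \bigwedge_i y'_i = \rfs[y_i][xs \mapsto xs']$, which is the postcondition $\rfs(xs', y')$ of the Hoare triple in Definition~\ref{def:inductive-rfs}; hence $\prog'$ is inductive relative to $\rfs$. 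With all three facts in hand, Theorem~\ref{thm:method-sound} yields $\prog \simeq (\init, \prog')$. I expect the main obstacle to be the congruence/compositionality lemma itself: making precise that equivalence modulo RFS is preserved under the operator-copying rules, and discharging the mild well-formedness assumption that applied functions $g$ in the \textsc{Func} rule do not themselves embed raw references to $xs$ outside the list-combinator positions that the \textsc{List} rule intercepts.
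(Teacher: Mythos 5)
Your proposal is correct and follows essentially the same route as the paper: reduce to Theorem~\ref{thm:method-sound}, discharge $\init \models \rfs$ from the $\mathsf{Model}$ call, and establish inductiveness via a structural-induction lemma over the rules of Figure~\ref{fig:decomp} whose \textsc{List} case invokes Theorem~\ref{thm:synthesizeexpr_soundness}. The only (cosmetic) difference is that you apply the compositional lemma uniformly to every component $E'_i$ to obtain the full postcondition $\rfs(xs',y')$, whereas the paper's Lemma~\ref{lem:decompose_soundness} states the conclusion only for the first component and handles the remaining ones by re-invoking Theorem~\ref{thm:synthesizeexpr_soundness} inside Lemma~\ref{lem:synthesizeonlineprog_model}; your uniform treatment is, if anything, slightly cleaner.
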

\section{Implementation}\label{sec:impl}

We have implemented our proposed technique in a tool called \tool written in Python. \tool uses the Reduce computer algebra system \cite{reduce} to perform quantifier elimination for both linear and nonlinear integer and rational arithmetic. When invoking Reduce, \tool ensures that formulas belong to a theory that admits quantifier elimination by replacing foreign terms with fresh variables. 

\bfpara{Conversion to functional IR.}
As mentioned earlier, \tool operates over offline programs written in a functional IR with higher-order combinators. However, \tool can also take as input Python programs and automatically converts them to our intermediate representation using a set of syntax-directed translation rules. Since transpilation from imperative to functional languages is an orthogonal problem, we refer the interested reader to prior papers on this topic~\cite{ben-oopsla22}.

\bfpara{Handling additional arguments.}
While our technical section assumes that the offline program takes a single list $xs$ as an argument, real-world programs can take additional arguments.  In this case, the RFS constructed by \tool includes those additional arguments and assumes a one-to-one correspondence between the additional arguments of the offline and online programs.

\bfpara{Solving templates via polynomial interpolation.}
Recall from Section \ref{sec:expr-synthesis} that {\sc MineExpressions} returns a set of templates (expressions with unknowns), which are utilized when performing enumerative search. However, there are several cases where the unknowns in these templates can be directly solved for using polynomial interpolation~\cite{polyinterp}. In particular, if the online procedure takes an auxiliary parameter $n$ that represents the number of processed stream elements i.e, the length of the list, then the desired expression can oftentimes be obtained by instantiating the unknowns in the templates with univariate polynomials over $n$. 
\tool utilizes SciPy's interpolation library~\cite{scipy} to infer candidate univariate polynomials and checks whether the synthesized expression is equivalent to its offline version modulo the RFS. If it is not, \tool falls back upon enumerative search using the generated template. We refer the interested reader to the extended version of the paper for more details.

\bfpara{Checking Equivalence modulo RFS.}
Ideally \tool would check equivalence between the online and offline expressions over all possible input streams. However, since automatically checking equivalence is out of scope for existing techniques, \tool resorts to unsound equivalence checking methods based on testing and bounded verification.  However, in practice, we have not come across any cases where the equivalence checker yielded an incorrect result.
\section{Evaluation}\label{sec:eval}

In this section, we evaluate \tool through experiments that aim to answer the following research questions:

\begin{enumerate}[label = \textbf{RQ\arabic*.}]
    \item (Usefulness) Can \tool convert non-trivial offline programs into equivalent online schemes?
    \item (Comparison against existing tools) How does \tool compare against state-of-the-art general purpose synthesizers like CVC5 \cite{cvc5} and Sketch \cite{sketch,sketch2}?
    \item (Ablation) How important are the key ideas underlying our approach?
\end{enumerate}

\begin{table}[] 
\caption{Statistics about the benchmark set} \label{tab:benchmark_stats}
\vspace{-10pt}
\begin{tabular}{|c|cc|cc|}
\hline
\multirow{2}{*}{} & \multicolumn{2}{c|}{Avg. AST Size}    & \multicolumn{2}{c|}{Median AST Size}  \\ \cline{2-5} 
                  & \multicolumn{1}{c|}{Offline} & Online & \multicolumn{1}{c|}{Offline} & Online \\ \hline
Stats             & \multicolumn{1}{c|}{25}      & 45     & \multicolumn{1}{c|}{24}      & 39     \\ \hline
Auction           & \multicolumn{1}{c|}{79}      & 76     & \multicolumn{1}{c|}{42}      & 44     \\ \hline
\end{tabular}
\end{table}

\bfpara{Sources of benchmarks.}
To answer these questions, we collected benchmarks from two domains where online algorithms  play a key role: %
\begin{itemize}[leftmargin=*]
    \item {\bf Statistics.} Online algorithms are particularly important in the context of \emph{statistical computations} over streaming data.  To evaluate \tool in this context, we collected 34 batch-processing programs that perform statistics over a list of elements. These functions are taken from two sources: The first is SciPy \cite{scipy},\footnote{Many SciPy functions use external libraries, such as numpy, for numerical computations. Since our prototype \tool does not support such external libraries, we manually pre-processed those  benchmarks.} an open-source Python library used for scientific computing, and the second one is OnlineStats.jl \cite{onlinestats}, a popular open-source Julia library that implements useful single-pass algorithms. Since the Julia benchmarks are online programs, we manually wrote their offline version in Python.  These statistics benchmarks include functions for computing skewness~\cite{sandia-report}, standard error of the mean (SEM)~\cite{sem}, geometric mean, LogSumExp, etc. 
    %
    \item {\bf Auctions.} Another domain where online algorithms play an important role is \emph{online auctions} that involve queries over continuous data streams. To evaluate \tool in this context, we consider 18 queries from the Nexmark benchmark suite, which includes queries that commonly arise in online auctions~\cite{nexmark}.\footnote{While there 23 queries in the Nexmark benchmark suite, 5 of them require mini-batching, which we currently do not support, so we consider 18 out of these 23 benchmarks. Furthermore, since all of these queries are written for streaming data, we manually wrote their batch processing versions.} Example tasks from this benchmark suite include generating bidding statistics reports, monitoring new bidders, determining top-$k$ bids, etc. 
\end{itemize}

\bfpara{Obtaining ground truth schemes.}
Some of the benchmarks in our suite contain both the offline program and its corresponding online implementation. For offline programs whose corresponding online version was not available, we either found its (established) online version from a different source or wrote it  ourselves. 


\bfpara{Statistics about benchmarks.}
Table~\ref{tab:benchmark_stats} provides statistics about these benchmarks in terms of the average and median program size, where size is measured in terms of the number of nodes in the abstract syntax tree (AST). While the size of the offline and online programs are similar for the auction benchmarks, we note that the size of the online programs are significantly larger (1.7$\times$) on average for the statistics benchmarks. We also note that some of these benchmarks require synthesizing very complex expressions (up to size $96$).

\bfpara{Experimental setup.}
All of our experiments are conducted on a machine with an Apple M1 Pro CPU and 32 GB of physical memory, running the macOS 14.1 operating system. For each task, we set the timeout to 10 minutes.

\subsection{Main Results}
%
%
%

\begin{table}[]
\caption{Main synthesis result.} \label{table:eval_main_res}
\vspace{-10pt}
\begin{tabular}{|c|cc|cc|}
\hline
\multirow{2}{*}{}    & \multicolumn{2}{c|}{Stats}                     & \multicolumn{2}{c|}{Auction}                   \\ \cline{2-5} 
                     & \multicolumn{1}{c|}{\% solved} & Avg. Time (s) & \multicolumn{1}{c|}{\% solved} & Avg. Time (s) \\ \hline
\tool & \multicolumn{1}{c|}{97\%}      & 33.4          & \multicolumn{1}{c|}{100\%}     & 10.0          \\ \hline
Sketch               & \multicolumn{1}{c|}{12\%}      & N/A           & \multicolumn{1}{c|}{17\%}      & N/A           \\ \hline
CVC5                 & \multicolumn{1}{c|}{36\%}      & N/A           & \multicolumn{1}{c|}{39\%}      & N/A           \\ \hline
\end{tabular}
\end{table}

\begin{figure}[!t]
\centering
\begin{subfigure}[b]{0.45\linewidth}
\includegraphics[width=\textwidth]{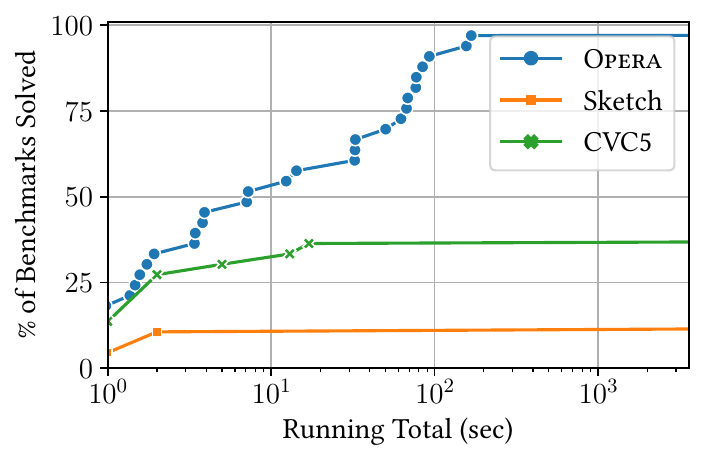}
\vspace{-0.2in}
\caption{\% of Stats Benchmarks Solved by Time}
\end{subfigure}
~\qquad\qquad~
\begin{subfigure}[b]{0.45\linewidth}
\includegraphics[width=\textwidth]{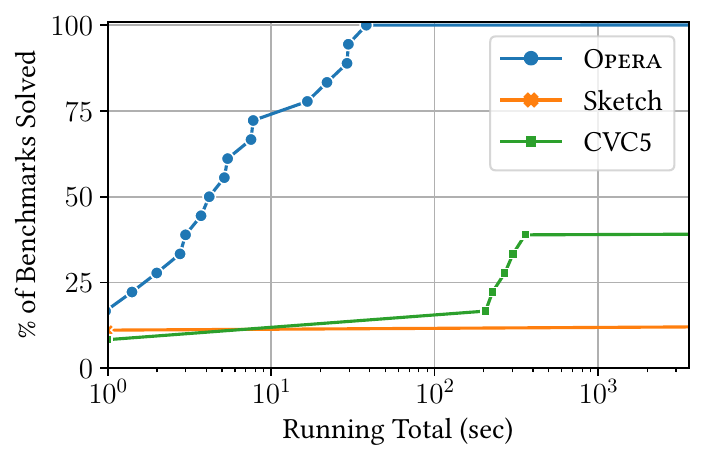}
\vspace{-0.2in}
\caption{\% of Auction Benchmarks Solved by Time}
\end{subfigure}
\vspace{-10pt}
\caption{Comparison between \tool and baselines.}\label{fig:main-cdf}
\vspace{-10pt}
\end{figure}

To answer our first two research questions, we evaluate \tool on our 51 benchmarks and compare it against two baselines.
Since there are no existing tools for generating online algorithms from their offline version, we adapt two SyGuS solvers (namely, CVC5~\cite{cvc5} and Sketch~\cite{sketch, sketch2} to our problem setting. {We chose CVC5 and Sketch among the SyGuS solvers because they support non-linear arithmetic, which is required for most of our benchmarks.})  To adapt these tools to our problem, we define the grammar of the target program to be \autoref{fig:scheme-syntax}, and we adapt the online-offline equivalence definition from \Cref{def:equiv} as the synthesis specification (using so-called \emph{oracle constraints} in SyGuS). Specifically, we assert that the synthesis result satisfies the relational function signature for a list of fixed length. We intentionally used lists of fixed size to avoid problems with the SMT solver. Finally, since SyGuS solvers require the signature of the function to be synthesized, we manually specify their signature. 
\par
\autoref{table:eval_main_res} summarizes the results of our evaluation for both \tool and the two SyGuS baselines.
In particular, 
\autoref{table:eval_main_res}  shows the percentage of benchmarks solved by each tool in the Statistics and Auction data sets, together with the average running time (in seconds) for \toolname.\footnote{The table does not report  time for the other tools because they time out on most benchmarks within the 10 min limit.} {We say that a tool solves a benchmark if it produces an online scheme that is  equivalent to the offline program, which we also verify manually.}

The key takeaway from this experiment is that \tool can solve 50 of the 51 offline programs in our benchmark suite within the 10 minute time limit. In contrast, CVC5 and Sketch solve 37\% and 14\% of the benchmarks, respectively.  We also note that average synthesis time for \tool across all benchmarks is 25.0 seconds.

To evaluate whether the  SyGuS baselines can solve more benchmarks when given a longer time limit, we also run an additional experiment with a time limit of 1 hour per task. The results of this experiment are shown in Figure~\ref{fig:main-cdf} as a cumulative distribution function (CDF) where the $x$-axis provides cumulative running time and the $y$-axis shows the percentage of benchmarks solved. As we can see, increasing the time limit does not allow any of the tools to solve additional benchmarks.

\bfpara{Qualitative Analysis for \tool.}
Of the 51 benchmarks \tool solves, we found that 41 of the synthesized schemes are the same as the manually written program. Among the 10 cases where the results differ, we found that the synthesized schemes perform the same computation but use different auxiliary parameters.
To gain more intuition about how this can happen, consider the following example: The average, $v'$, of a stream of numbers can be computed by using the sum of previously processed elements, $s$, or the previous average, $v$ as shown below:
\[
v' = (s + x) / (n + 1) \quad\quad\quad\quad v' = (v * n + x) / (n + 1).
\]
Both of them are mathematically equivalent but use different auxiliary parameters.
We note that the synthesized schemes have the same time and space complexity as the ground truth and are comparable in terms of AST size.

\begin{figure}[!t]
\centering
\begin{minted}
[
fontsize=\footnotesize,
escapeinside=||,
numbersep=5pt,
frame=lines,framerule=2pt
]
{python}
def kurtosis_online(v, m4, m3, m2, s, n, x):
    n += 1
    new_s = s + x
    delta = x - (s / n)
    delta_n = delta / n
    |\colorbox{pink}{new\_m4 = m4 + (delta * delta\_n * (n - 1) * (delta\_n**2) * (n**2 - 3 * n + 3)}|
             |\colorbox{pink}{+ 6 * delta\_n**2 * m2 - 4 * delta\_n * m3)}|
    new_m3 = m3 + delta * delta_n * (n - 1) * delta_n * (n - 2) - 3 * delta_n * m2
    new_m2 = m2 + delta * delta_n * (n - 1)
    sigma = (m2 / n) ** 0.5
    return (new_m4 / n) / (sigma**4) - 3, new_m4, new_m3, new_m2, new_s, n
\end{minted}
\vspace{-10pt}
\caption{Python implementation of online kurtosis computation.}
\label{fig:python-kurtosis}
\vspace{-10pt}
\end{figure}

\bfpara{Failure analysis.}
The only benchmark that \tool fails to solve involves computing \emph{kurtosis}, which is a measure of the tailedness of a probability distribution. \autoref{fig:python-kurtosis} shows the online algorithm for computing \emph{kurtosis} based on the method from ~\cite{sandia-report}. As we can see, the online algorithm involves a very large expression (in line 6 and also \colorbox{pink}{highlighted} in the code) that is very difficult to synthesize, so our {\sc SynthesizeExpr} procedure times out when trying to synthesize this complex expression.

\bfpara{Summary.}
Our evaluation reveals the followings answers for our first two research questions: \\

\idiotbox{RQ1}{\tool can automatically synthesize 50 out of 51 online schemes with an average synthesis time of 25.0 seconds.}

\idiotbox{RQ2}{\tool outperforms existing SyGuS solvers, synthesizing $2.6\times$ and $7.2\times$ as many tasks as CVC5 and Sketch respectively.}

\subsection{Ablation Study}
The core technical idea underlying \tool is the RFS-driven synthesis methodology, which also enables two additional optimizations used in our synthesis algorithm, namely \emph{decomposition} and the use of \emph{symbolic techniques} (namely, quantifier elimination) for expression synthesis. In this section, we evaluate the relative impact of these two ideas by considering two ablations of \tool:

\begin{enumerate}[leftmargin=*]
    \item {\bf \tool-\textsc{NoDecomp}}: This is a variant of \tool that disables compositional synthesis. In other words, rather than synthesizing a set of independent expressions, it attempts to synthesize the entire online program at once. However, it still employs the symbolic reasoning techniques that are part of the {\sc SynthesizeExpr} procedure.
    \item {\bf \tool-\textsc{NoSymbolic}}: This is a variant of \tool that replaces solver-based derivation of expressions with enumerative search. In other words, it replaces the body of {\sc SynthesizeExpr} with a call to \textsf{EnumSynthesize}.
\end{enumerate}

\begin{figure}[!t]
    \centering
    \includegraphics[width=0.6\textwidth]{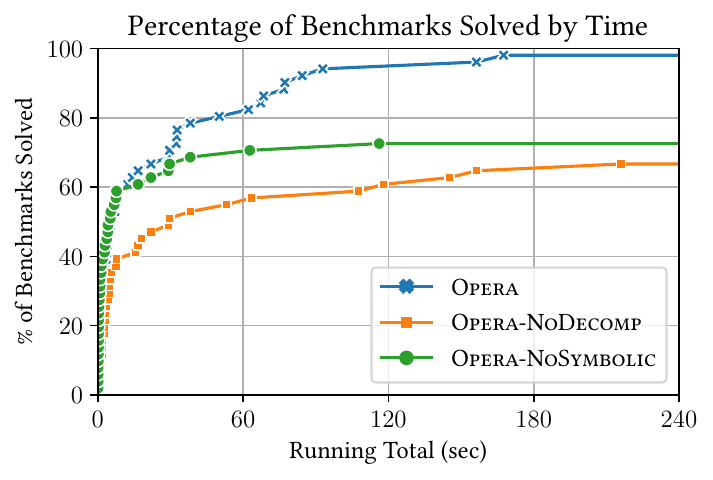}
    \vspace{-15pt}
    \caption{Comparison between \tool and its ablations.}
    \label{fig:eval_ablation}
    \vspace{-15pt}
\end{figure}

Figure~\ref{fig:eval_ablation} shows the Cumulative Distribution Function (CDF) for \tool and its ablations. As standard, the $x$-axis corresponds to the total running time, and the $y$-axis shows the percentage of benchmarks solved. As we can see, both ablations perform worse than \tool with these optimizations enabled.
In particular, the variant of \toolname without the symbolic technique solves 73\% of the benchmarks, whereas the ablation without decomposition solves 67\% of the benchmarks. For benchmarks that can be solved by both ablations, the average running time of \toolname is 10.3 seconds, whereas \tool-\textsc{NoDecomp} and \tool-\textsc{NoSymbolic} take 20.9 and 6.6 seconds respectively. As expected, decomposition has a positive impact on synthesis time regardless of the complexity of the task. In contrast, the symbolic expression synthesis technique that utilizes quantifier elimination slightly hurts performance for easy benchmarks, however, it allows significantly more benchmarks to be solved within the 10-minute time limit overall. \\

%

\idiotbox{RQ3}{Decomposition and symbolic reasoning have a significant positive impact on the performance of \tool. In particular, ablated versions of \tool without one of these optimizations solve $31\%$ and $26\%$ fewer benchmarks within the 10-minute time limit.}
%

%
\section{Limitations} \label{sec:limitations}
{
In this section, we discuss some of the main limitations of the proposed approach.

\bfpara{Limitations of problem statement.}
First, our problem statement is defined in terms of a functional IR, which means that the offline program needs to be expressible in this IR. In practice, we found that almost all offline algorithms are naturally expressed in this core functional language, and, as discussed  in \cite{HUTTON_1999}, a functional language with fold is quite expressive .
Second, our problem statement requires the synthesis result to be {\it semantically equivalent}. However, for some offline algorithms (e.g., quantile computation \cite{manku_approx_quantile}), any online algorithm that does not require remembering the entire stream necessitates approximation algorithms. We believe that synthesizing online {\it approximation} algorithms is a very interesting  direction for future work.

\bfpara{Limitations of synthesis approach.}
Our synthesis methodology relies on the assumption stated in \Cref{thm:rfs_completeness} – i.e., that it has an inductive invariant that is a conjunction of equalities. This assumption is realistic because online algorithms take additional arguments that correspond to sub-computations; thus, the inductive invariant can, in practice, always be expressed as a conjunction of equalities.   Second, as stated in \Cref{subsec:syn_online_prog}, Opera decomposes the synthesis task based on the assumption that the online program can be constructed by composing incremental computations over list expressions using operators that appear in the offline program.

\section{Related Work} \label{sec:related}
This paper is related to a long line of work on incremental computation, which attempts to only recompute those outputs which depend on changed data. Online algorithms fall under the general umbrella of incremental computation in that they compute the result one element at a time by reusing previous computations. Most of the work on incremental computation focuses on dynamic incrementalization~\cite{acar_2005, acar_2006, acar_2008, acar_2014, adapton, nominaladapton, c_memo, mapreduce_memo, reps_popl93, acar_toplas06, koenig_toplas82} by providing language support and runtime frameworks to improve running time at the cost of space. In this paper, we take a different approach by automatically synthesizing incremental online algorithms from their batch processing version. Thus, the following discussion  focuses on approaches that are more closely related to synthesis. 
\bfpara{Synthesizing incremental computation.}
There is a body of prior work on synthesizing incremental computations~\cite{annieliu, higherorder, ingress, rpai, linalg, ras_dp, sun_synthesizing_2023}. At a high level, these techniques take as input a base program $f$, a change operator $\oplus$ and attempt to generate an efficient program $f'$ that computes $f(x \oplus y)$ given $y$ and $f(x)$. Some of the existing approaches in this space are \emph{domain-specific}. For example, Shaikha et al. focus on linear algebra~\cite{linalg}, and Zhou at al. \cite{ingress} studies incremental computations related to graph processing. The technique by Pu et al. is not domain specific per se; however, their focus is on automatic derivation of dynamic programming algorithms from recurrence relations~\cite{ras_dp}.
In a similar vein, Sun et al.~\cite{sun_synthesizing_2023} studies the synthesis of efficient memoization algorithms for dynamic programming subproblems.
%

%
Among prior work on synthesizing incremental computation, the most closely related to ours is that of Liu~\cite{annieliu}, which utilizes a set of pre-defined rewrite rules to transform a base program $f$ to its incremental version $f'$. Their technique first transforms the base program to save all intermediate/auxiliary results and then tries to rewrite the program to utilize the newly introduced variables. In contrast to this rewrite-based approach, our method employs  program synthesis to solve the slightly different problem of deriving \emph{online algorithms}. More recent work by Cai et al.~\cite{higherorder} aims to statically derive incremental versions of programs written in a higher-order language. They propose a theory of changes and \emph{derivatives} and describe a type-directed method---parametrized by so-called \emph{plug-ins} for incrementalizing each type---to automatically generate a function's derivative. In contrast, our method is not type-directed and does not rely on type-specific plug-ins.
\bfpara{Related approaches in program synthesis.} This paper is also related to a long line of work on \emph{program synthesis}, which aims to generate a program from the user's specification (e.g., input-output examples or logical formula)~\cite{sketch,lambda2,synquid,flashfill,blaze,clis,semregex}. Particularly related to this work are compositional synthesis techniques that aim to decompose the original problem into independent subproblems. For example, $\lambda^2$ utilizes the semantics of functional combinators to infer input-output examples for their arguments~\cite{lambda2}, and Synquid~\cite{synquid} leverages refinement types to decompose the problem. In contrast to prior work on compositional synthesis, our method utilizes the offline program and the relational function signature to obtain \emph{completely independent} synthesis sub-tasks. \par

Another prominent aspect of our approach is the use of symbolic reasoning to derive expressions in the target program. In particular, for expression synthesis, our approach utilizes \emph{quantifier elimination} to find implicates of a certain shape. There are prior techniques that have also leveraged quantifier elimination in the context of synthesis. For example,  Comfusy \cite{comfusy1, comfusy2} and AE-VAL \cite{aeval} both apply quantifier-elimination within a deductive synthesizer to rewrite a logical specification over integer and rational arithmetic into straight-line code. The use of quantifier elimination in \tool is most closely related to the recent work of Pailoor et al.~\cite{revamp} on ADT refactoring. In that work, they utilize quantifier elimination to perform \emph{abductive reasoning} (as done previously in~\cite{explain,hola,pldi12,maximal}), and they combine abductive reasoning with search to expedite synthesis. In contrast to their approach, we use quantifier elimination to infer logical implicates of a certain shape by encoding the semantics of list combinators. Finally, recent work by Goharshady et al.~\cite{goharshady_algebro-geometric_2023} presents a promising alternative to quantifier elimination for synthesizing real valued polynomial expressions. At a high level, their approach requires a user to specify the maximum degree of the polynomial to be synthesized,  a set of variables, along with a specification, and it synthesizes a polynomial over those variables that satisfies the specification. To make synthesis scalable, they reduce the synthesis problem to an instance of quadratic programming using fundamental theorems in algebraic geometry. While this technique is specific to generating real polynomials, \tool could, in principle, apply this technique during expression synthesis if the offline expression is real-valued. 
%
%

%
Due to our use of \emph{relational function signatures} to drive online program generation, this paper is also related to \emph{relational synthesis}~\cite{relish-oopsla18,pins-pldi11,genic-pldi17,lenses-icfp19,lenses-popl18, rinard_repclass}, where the goal is to synthesize programs based on relational specifications that relate multiple programs or multiple runs of a program. For example, Relish~\cite{relish-oopsla18} leverages hierarchical finite tree automata to synthesize comparators, string encoders and decoders. Genic~\cite{genic-pldi17} and PINS~\cite{pins-pldi11} study the program inversion problem~\cite{inversion-lncs78} using symbolic extended finite transducers and path-based inductive synthesis, respectively.
There is another line of works that infer a relational specification to guide the synthesis. Mask~\cite{rinard_repclass} synthesizes replacement classes defined by the inter-class equivalence relationship.
Unlike Relish~\cite{relish-oopsla18}, Genic~\cite{genic-pldi17}, and PINS~\cite{pins-pldi11}, \tool does not have the relational specification as part of the input, so it infers an RFS as the relational specification, which is similar to Mask~\cite{rinard_repclass}. However, the relational specifications in our context are very different than those \cite{rinard_repclass}.
\tool is also related to prior work on \emph{divide-and-conquer program synthesis}~\cite{azadeh_pldi17, azadeh_pldi19, autolifter} which aims to synthesize a divide-and-conquer based procedure from a reference implementation. This is because one 
can view an online scheme as an instance of divide-and-conquer which processes the first $n-1$ elements of the stream and then joins the result with values induced by the $n^\text{th}$ element. For example, {\sc Parsynt}~\cite{azadeh_pldi17, azadeh_pldi19} transforms a single-pass algorithm into a divide-and-conquer program by lifting a sequential loop into a list homomorphism. Such a technique would not work in our context where the reference implementations can be multi-pass procedures. {\sc AutoLifter}~\cite{autolifter} is the most closely related to our approach as it removes the restriction that the implementation is single-pass and attempts to simultaneously determine the set of auxiliary variables (called \emph{aux} function in the paper) and the online program (referred to as the \emph{comb} function). Crucially, this approach first requires users to provide a relational specification between the \emph{aux} and \emph{comb} after which {\sc AutoLifter} will synthesize \emph{aux} and \emph{comb}. It does so by iteratively rewriting the specification into multiple sub-specifications that are only in terms of \emph{aux} and \emph{comb}, and then uses a CEGIS-based synthesizer to solve the sub-problems. However, unlike \tool, {\sc AutoLifter} does not exploit the syntactic structure of the offline program nor perform any symbolic reasoning to infer templates or implicates, both of which are essential to \tool's success. 
Finally, \tool is also related to prior work on \emph{recursive program synthesis} ~\cite{synduce,syrup,burst,synquid,lambda2}. Conceptually, one could  view an online scheme as a recursive function that returns the initializer in the base case and performs the online computation by combining the new input with the result of recursive calls over the processed elements. However, many of these synthesizers take as input either I/O examples or formal specifications in the form of types or logical formulas.  A more recent tool, Synduce~\cite{synduce}, utilizes the reference implementation to synthesize recursive programs; hence, it could potentially be applied to our setting, as the offline program constitutes a reference implementation. 
However, Synduce is not fully automatic as it requires the user to provide a so-called recursion skeleton. Furthermore, even when we tried to manually supply Synduce with the ground truth recursion skeleton, we were unable to get it  work on some of our simple examples, such as arithmetic mean. We conjecture that Synduce is not suitable for our setting because of the heavy use of non-linear arithmetic in these benchmarks. 


\bfpara{Program transformation and optimization techniques.}
This paper is also related to a long
line of work on program optimizations that aim to eliminate unnecessary computations.
Loop fusion is one such technique that consolidates multiple loops manipulating the same array into a single loop~\cite{loop_fusion_book}, reducing the overhead of loop control and enhancing data locality.
In the context of functional programming, lazy evaluation  allows postponing  computations until their result is actually needed~\cite{henderson_lazy_eval}. 
Work on list fusion and deforestation~\cite{coutts_stream_fusion,gill_deforestation,seidl_deforest_stop,wadler_deforest,meijer_fp} aims to eliminate intermediate data structures (e.g., lists, trees) in programs written using higher-order combinators like \texttt{map} and \texttt{fold}.
\section{Conclusion}

In this paper, we studied the problem of automatically synthesizing online streaming algorithms from their offline batch-processing version. Our method first infers a so-called \emph{relational function signature (RFS)}, which specifies the auxiliary parameters of the online program as well as how those parameters relate to computations in the offline program. Our synthesis methodology then boils down to finding an online program that is inductive relative to this RFS. Our specific synthesis algorithm uses the offline program, together with the RFS, to decompose the original problem into a set of into a set of independent sub-problems, which are solved through a combination of symbolic reasoning and search.

We have implemented the proposed approach in a tool called \tool and evaluated it on over 50 algorithms from two domains, namely, statistical computing and online auctions. Our evaluation shows that \tool can successfully solve all but one of the benchmarks and that it significantly outperforms two baselines that are adaptations of existing SyGuS solvers. Our evaluation also demonstrates the benefits of decomposition and symbolic reasoning through ablation studies.


\begin{acks}
  We thank the anonymous reviewers for their thoughtful and constructive feedback. We further acknowledge Anders Miltner for his help in formalizing the problem as well as Ben Mariano, Jingbo Wang, Celeste Barnaby, Kia Rahmani and the rest of the UToPiA group for their insightful feedback on early drafts of this paper. This work was conducted in a research group supported by \grantsponsor{GS100000001}{National Science Foundation}{http://dx.doi.org/10.13039/100000001} awards \grantnum{GS100000001}{CCF-1762299}, \grantnum{GS100000001}{CCF-1918889}, \grantnum{GS100000001}{CNS-1908304}, \grantnum{GS100000001}{CCF-1901376}, \grantnum{GS100000001}{CNS-2120696}, \grantnum{GS100000001}{CCF-2210831}, and \grantnum{GS100000001}{CCF-2319471} as well as an NSERC Discovery Grant.
\end{acks}

\bibliography{main}

\newpage
\appendix
\section{Proofs}
\subsection{Proving Theorem \ref{thm:method-sound}} \label{proof:thm-method-sound}
In the following proof, we write $xs'$ to denote the stream corresponding to list $xs$.

\begin{proof}
We prove a slightly stronger theorem that if $\scheme \models \rfs$ then for all $xs$, $\denot{P}_{xs} = \last(\denot{\scheme}_{xs'})$ and that $\rfs(xs, \foldl(\prog', \init, xs))$ is true. The correctness of the proof relies on the fact that for all streams $xs'$, $\last(\denot{\scheme}_{xs'}) = \fst(\foldl(\prog', \init, xs))$. 
\begin{itemize}[leftmargin=*]
    \item {\bf Base Case}: $xs = []$.  Without loss of generality, let $\init = (c_1, \ldots, c_n)$. Since $xs = []$, we have that $\foldl(\prog', \init, []) = \init$ which means $\last(\denot{\scheme}_{xs'}) = c_1$. Furthermore, since $\scheme \models \rfs$ we can infer that  $\rfs[y_1]([]) = E([]) = \denot{P}_{[]} = c_1$ and that $\rfs([], \foldl(\prog', \init, []))$ is true. Thus, the theorem holds on the empty list.
    \item {\bf Inductive Case}: Assuming the theorem holds for list $xs$, consider a list of the form $ys = xs \pp [x]$ with corresponding stream $ys'$. Since $\last(\denot{\scheme}_{ys'}) = \fst(\foldl(\prog', \init, ys))$ we have from the definition of $\foldl$ that $\last(\denot{\scheme}_{ys'}) = \fst(\prog'( \foldl(\prog', \init, xs), x)$. From our inductive hypothesis, we know that $\rfs(xs, \foldl(\prog', \init, xs))$ is true and since $\scheme \models \rfs$ we know that $\rfs$ is inductive so we can infer that $\rfs(ys, \prog'(\foldl(\prog', \init, xs), x))$ is true. Thus,  $\rfs(ys, \foldl(\prog, \init, ys))$ holds which means $\rfs[y_1](ys) = \fst(\foldl(\prog, \init, ys)) = \last(\denot{\scheme}_{ys'})$. Finally, since $\rfs[y_1] = E$ we know that $\denot{P}_{ys} = E(ys) = \rfs[y_1](ys)$ which allows us to conclude that $\denot{P}_{ys} =  \last(\denot{\scheme}_{ys'})$. 
\end{itemize}
\end{proof}

\subsection{Proving Theorem \ref{thm:rfs_completeness}}
\begin{proof}
Let $\prog$ and $\scheme = (\init, \prog')$ be defined as in the theorem such that $\prog \simeq \scheme$.
Let $\lambda xs. \lambda v. \phi$ be the invariant for the expression $\foldl(\prog', \init, xs)$ such that $\phi \equiv \bigwedge_i v_i = E_i$ with $E_1 = E$.
We construct the RFS to be $\rfs$.
First, we show that condition (2) of our methodology holds. It follows that the invariant $\rfs([], \foldl(\prog', \init, [])) \equiv \rfs([], \init) \equiv \init \models \rfs$.

Then, we show that $\prog'$ is inductive relative to $\rfs$. Since the expression $\foldl(\prog', \init, xs)$ has an inductive invariant $\rfs$, we have the following Hoare triple holds:
\[
\hoare{\rfs(xs, {v})}{\quad {v}' := \foldl(\prog', \init, xs \pp [x]); \ xs' = xs \listappend [x] \quad \quad  } {\rfs(xs', {v}')},
\]
and by the axiom of $\foldl$, we derive the following
\[
\hoare{\rfs(xs, {v})}{\quad {v}' := \prog'\left( \foldl(\prog', \init, xs), x \right); \ xs' = xs \listappend [x] \quad \quad  } {\rfs(xs', {v}')}.
\]
Finally, by {the definition of $\foldl(\prog', \init, xs)$ inductive relative to $\rfs$}, we have $v = \foldl(\prog', \init, xs),$ which gives
\[
\hoare{\rfs(xs, {v})}{\quad {v}' := \prog'\left( v, x \right); \ xs' = xs \listappend [x] \quad \quad  } {\rfs(xs', {v}')}.
\]
Thus, $\prog'$ is inductive relative to $\rfs$. 
\end{proof}
\subsection{Proving Theorem \ref{thm:synthesizeexpr_soundness}}
\begin{proof}
We show that $E'$, the expression returned by {\sc SynthesizeExpr}, is equivalent to $E$ modulo $\rfs$, that is,
\[
\rfs(xs, y) \models E' = E[(xs\pp[x])/xs].
\]
\begin{itemize}
    \item [Case 1:] Let us assume that line 4 of the {\sc SynthesizeExpr} procedure returns an expression $E'$.
    Then $E'$ is deduced by matching $\chi$ with $\hole = E'$, where $\chi$ is obtained from the \textsc{FindImplicate} function called in line 2 of \Cref{algo:expr-synth}. Note that the implicate template passed to \textsc{FindImplicate} is $E[(xs\pp[x])/xs] = \hole$, and the invocation to \textsf{ElimQuantifier} on line 12 preserves the implicate formula.
    Additionally, the conjunction formula on line 10 specifies the condition (namely, RFS and axioms) under which the implicate template holds.
    Thus, by soundness of quantifier elimination procedure, we can infer that $E'$ is an expression satisfying $\rfs(xs, y) \models E' = E[(xs\pp[x])/xs]$.
    \item [Case 2:] Suppose that line 7 of {\sc SynthesizeExpr} returns $E'$.
    Then $E'$ is the result of the enumerative synthesis procedure with templatized expressions $\theta$.
    Similarly to Case 1, the soundness of quantifier elimination procedure implies the under-approximation of $E'$, each $t$ on line 18,
    is equivalent modulo $\rfs$ when the length of the input list is fixed, namely $k$ here.
    Then, we resort to the soundness of equivalence checking in {\sc EnumSynthesize} for any synthesized expression $E'$ such that $E' = E[(xs\pp[x])/xs]$.
\end{itemize}
In both cases, we conclude $E' = E[(xs\pp[x])/xs]$ and thus $E'$ is equivalent to $E$ modulo $\rfs$.
\end{proof}
\subsection{Proving Theorem \ref{thm:synthesize_soundness}}
\begin{lemma} \label{lem:decompose_soundness}
    Suppose {\sc Decompose$(\rfs, \prog)$} returns $(\prog', \context)$.
    If $\textsf{Holes}(\prog') = \left\{ h_1, \ldots, h_n \right\}$ can be completed by $E_1, \ldots, E_n$ such that $\rfs(xs, y) \models E_i = \context[i][(xs\pp[x])/xs]$,
    then the completed program obtained by $\prog' = \prog'[E_1/h_1, \ldots, E_n/h_n]$ satisfies $\rfs(xs, y) \models \fst(\prog'(y, x)) = \prog(xs')$.
\end{lemma}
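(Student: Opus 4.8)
The plan is to prove a stronger, subexpression-level statement by induction on the decomposition judgment of \Cref{fig:decomp}, and then read the lemma off the first tuple component. The lemma as stated is not directly inductive, so I first would generalize it to the following auxiliary claim about the relation $\rfs \vdash E \rewrite \sketch, \context$: for every completion that replaces each hole $h \in \textsf{Holes}(\sketch)$ by an expression $E_h$ satisfying $\rfs(xs, y) \models E_h = \context[h][(xs\pp[x])/xs]$, the resulting hole-free expression $\hat{\sketch}$ satisfies $\rfs(xs, y) \models \hat{\sketch} = E[(xs\pp[x])/xs]$. This is precisely the per-expression analogue of \Cref{def:eq-expr}, with the hole specifications $\context[h]$ playing the role of the offline expression.

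Next I would establish this claim by structural induction on the derivation of $\rfs \vdash E \rewrite \sketch, \context$, with one case per rule. The \textsc{Leaf} case is immediate: $\sketch = E$ has no holes, and since $E$ is a non-list leaf (a constant, an ordinary variable, or some $y_i$) it does not mention $xs$, so $E[(xs\pp[x])/xs] = E = \hat{\sketch}$. The \textsc{List} case is also direct: $\sketch = \hole$ with $\context = \set{\hole \mapsto L}$, so $\hat{\sketch} = E_{\hole}$ and the hypothesis $\rfs(xs,y) \models E_{\hole} = L[(xs\pp[x])/xs]$ is exactly the claim. The \textsc{Func} and \textsc{ITE} cases are the congruence cases: assuming that the holes introduced by sibling subderivations are fresh (so that the global completion restricts to a valid completion of each $\context_i$), the induction hypotheses give $\rfs(xs,y) \models \hat{\sketch}_i = E_i[(xs\pp[x])/xs]$ for each argument, and I close the case using that substitution distributes over the syntactic constructor, e.g. $g(E_1,\ldots,E_k)[(xs\pp[x])/xs] = g(E_1[(xs\pp[x])/xs], \ldots, E_k[(xs\pp[x])/xs])$, together with congruence of equality under the fixed operator $g$ (respectively the conditional) modulo $\rfs$.

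Finally I would discharge the lemma itself. For $\prog = \lambda xs. E$, the \textsc{Prog} rule yields $\prog' = \lambda (y_1, \ldots, y_n). \lambda x. (\sketch_1, \ldots, \sketch_n)$, where each $\sketch_i$ is obtained from $\rfs \vdash \rfs[y_i] \rewrite \sketch_i, \context_i$, and by construction in \Cref{algo:rfs} we have $\rfs[y_1] = E$. The completed program satisfies $\fst(\prog'(y,x)) = \hat{\sketch}_1$, so applying the auxiliary claim to the first component gives $\rfs(xs,y) \models \fst(\prog'(y,x)) = \rfs[y_1][(xs\pp[x])/xs] = E[(xs\pp[x])/xs]$. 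Since $\prog = \lambda xs. E$ and $xs' = xs \pp [x]$, we have $E[(xs\pp[x])/xs] = \prog(xs')$, which is exactly the desired equivalence modulo $\rfs$.

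I expect the main obstacle to be the strengthening step and its accompanying bookkeeping, rather than any single calculation. Formulating the correct subexpression-level invariant is what makes the induction go through, and the \textsc{Func}/\textsc{ITE} cases require care to justify two things: that a single global hole-completion decomposes into independent completions of each sub-context (which relies on holes from sibling subderivations being fresh and hence disjoint), and that equational reasoning under the assumption $\rfs(xs,y)$ is closed under the program's operators. Once this invariant and the congruence principle are in place, the remaining cases are essentially mechanical.
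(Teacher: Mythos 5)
Your proposal is correct and follows essentially the same route as the paper's proof: a strengthened per-expression claim proved by structural induction on the decomposition rules (\textsc{Leaf}, \textsc{List}, \textsc{Func}, \textsc{ITE}), followed by an application of the \textsc{Prog} rule and the fact that $\rfs[y_1] = E$ to conclude for the first tuple component. Your explicit attention to hole freshness across sibling subderivations and to why non-list leaves are unaffected by the substitution is slightly more careful than the paper's writeup, but it is the same argument.
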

\begin{proof}

We proceed the proof by structural induction over the decomposition rules presented in \Cref{fig:decomp}.

First, we show that every non-\textsc{Prog} rule preserves soundness. That is, given an offline expression $E$, if the decomposition rules returns the sketch $\sketch$ and its corresponding context $\context$, the completed sketch denoted as $E'$, obtained by replacing $\textsf{Holes}(\sketch)$ with $e_1, \ldots, e_n$ such that $\rfs(xs, y) \models e_i = \context[i][(xs\pp[x])/xs]$, is equivalent to $E$ modulo $\rfs$.
\begin{itemize}[leftmargin=*]
    \item \textsc{\textbf{Leaf.}} Suppose $E$ is a leaf expression. The \textsc{Leaf} rule returns $\sketch = E$. It is trivial to conclude as $\textsf{Holes} = \varnothing$.

    \item \textsc{\textbf{List.}} Suppose $E = L$ is a list expression. The \textsc{List} rule returns $\rfs \vdash L \rewrite \Box$ and $ \set{\Box \mapsto L}$. Let $E'$ be the online expression that completes $\sketch = \Box$.
    It follows our assumption that $\rfs(xs, y) \models E' = L[(xs\pp[x])/xs]$.

    \item \textsc{\textbf{Func.}} Suppose $E = g(E_1, \ldots, E_n)$. The \textsc{Func} rule returns $g(\sketch_1, \ldots, \sketch_n)$ and $\context_1 \cup \ldots \cup \context_n$. Let $E'_1, \ldots, E'_n$ be the completion of $\sketch_1, \ldots, \sketch_n$.
    By IH, we have $\rfs(xs, y) \models E'_1 = E_1[(xs\pp[x])/xs] \wedge \ldots \wedge E'_n = E_n[(xs\pp[x])/xs]$.
    Thus, given the completion of $\sketch$ as $g(E'_1, \ldots, E'_n)$, we have
\begin{align*}
    \rfs(xs, y) \models g(E'_1, \ldots, E'_n) &= g(E_1[(xs\pp[x])/xs], \ldots, E_n[(xs\pp[x])/xs]) \\
            &= g(E_1, \ldots, E_n)[(xs\pp[x])/xs] = E[(xs\pp[x])/xs].
\end{align*}

    \item \textsc{\textbf{ITE.}} Suppose $E = \ite{E_1}{E_2}{E_3}$ and $\rfs \vdash E_i \rewrite \sketch_i, \context_i$ for $E_1, E_2$ and $E_3$. Let $E'_1, E'_2$ and $E'_3$ be the completion of $\sketch_1$, $\sketch_2$ and $\sketch_3$ correspondingly, and $\sketch$ would be completed as $\ite{E'_1}{E'_2}{E'_3}$.
By IH, we have $\rfs \models E'_i = E_i$. Therefore,
\begin{align*}
    \rfs(xs, y) \models \ite{E'_1}{E'_2}{E'_3} &= \ite{E_1[(xs\pp[x])/xs]}{E_2[(xs\pp[x])/xs]}{E_3[(xs\pp[x])/xs]} \\
    &= (\ite{E_1}{E_2}{E_3})[(xs\pp[x])/xs] = E[(xs\pp[x])/xs].
\end{align*}
\end{itemize}

It remains to show the top-level decomposition rule, namely {\sc Prog}, preserves soundness.
Suppose the input to the {\sc Prog} rule is $P = \lambda xs. E$, $\dom(\rfs) = \set{y_1, \ldots, y_n}$.
Suppose $\rfs \vdash \rfs[y_i] \rewrite \sketch_i, \context_i $ for $i \in \set{1, \ldots, n}$ and the completion $\prog' = \lambda (y_1, \ldots, y_n). \lambda x. (E'_1, \ldots, E'_n)$.
Let $E'_i$ be the completion of $\sketch_i$ for $i \in \set{1, \ldots, n}$.
By IH, we have $\rfs(xs, y) \models E'_i = \rfs[y_i][(xs\pp[x])/xs]$. Now, since $y_i = E = P(xs)$ from line 2 of \Cref{algo:rfs}, we conclude
\begin{align*}
    \rfs(xs, y) \models \fst(\prog'(y, x)) = E'_1 = E[(xs\pp[x])/xs] = P(xs').
\end{align*}
\end{proof}
\begin{lemma} \label{lem:synthesizeonlineprog_equivmod}
    If {\sc SynthesizeOnlineProg$(\prog, \rfs)$} returns $\prog'$, then we have $\rfs(xs, y) \models \fst(\prog'(y, x)) = \prog(xs')$.
\end{lemma}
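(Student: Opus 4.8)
The plan is to obtain this lemma as pure \emph{glue} between two already-established facts: the soundness of expression synthesis (Theorem~\ref{thm:synthesizeexpr_soundness}) and the soundness of sketch completion (Lemma~\ref{lem:decompose_soundness}). Since {\sc SynthesizeOnlineProg}($\prog, \rfs$) returns a program $\prog'$ rather than $\bot$, the loop (lines 3--6 of Algorithm~\ref{algo:online-prog}) must have run to completion without any call to {\sc SynthesizeExpr} yielding $\bot$. Writing $\prog'_0$ for the sketch produced by the call $\textsc{Decompose}(\rfs, \prog)$ on line~2, and $\textsf{Holes}(\prog'_0) = \{h_1, \ldots, h_n\}$ for its holes, this means that every iteration $i$ computes an expression $E_i = \textsc{SynthesizeExpr}(\rfs, \context[h_i]) \neq \bot$.

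First, I would apply Theorem~\ref{thm:synthesizeexpr_soundness} to each of these $n$ successful calls. Because {\sc SynthesizeExpr} is invoked with the hole specification $\context[h_i]$ playing the role of the offline expression, the theorem gives that each returned $E_i$ is equivalent to $\context[h_i]$ modulo the RFS, i.e.\ $\rfs(xs, y) \models E_i = \context[h_i][(xs\pp[x])/xs]$ for every $i \in \{1, \ldots, n\}$. Second, I would observe that these $n$ equivalences are \emph{exactly} the hypotheses demanded by Lemma~\ref{lem:decompose_soundness} applied to the pair $(\prog'_0, \context)$ returned by {\sc Decompose}. That lemma then yields that the completion $\prog'_0[E_1/h_1, \ldots, E_n/h_n]$ satisfies $\rfs(xs, y) \models \fst(\prog'(y, x)) = \prog(xs')$, which is the desired conclusion.

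The only point requiring care — and hence the main (if minor) obstacle — is to confirm that the program $\prog'$ actually returned by {\sc SynthesizeOnlineProg} coincides with the \emph{simultaneous} completion $\prog'_0[E_1/h_1, \ldots, E_n/h_n]$ assumed in Lemma~\ref{lem:decompose_soundness}. The loop instead performs the single-hole substitutions $\prog' \gets \prog'[E_i/h_i]$ one at a time. I would discharge this by noting that the holes $h_1, \ldots, h_n$ are syntactically distinct placeholders and each synthesized $E_i$ is itself hole-free; consequently a later substitution neither rewrites an earlier-inserted $E_j$ nor introduces fresh occurrences of any $h_i$, so the iterated substitutions commute and agree with the simultaneous one. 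With this identification in place, the lemma follows immediately from the two cited results, and no further calculation is needed.
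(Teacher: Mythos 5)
Your proposal matches the paper's own proof: both apply Theorem~\ref{thm:synthesizeexpr_soundness} to each successful call to {\sc SynthesizeExpr} to obtain $\rfs(xs,y)\models E_i = \context[h_i][(xs\pp[x])/xs]$ for every hole, and then invoke Lemma~\ref{lem:decompose_soundness} on the completed sketch to conclude. Your extra remark that the one-at-a-time substitutions in the loop coincide with the simultaneous completion assumed by Lemma~\ref{lem:decompose_soundness} is a detail the paper leaves implicit, and it is correctly discharged.
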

\begin{proof}
    It follows \Cref{thm:synthesizeexpr_soundness} that expression $E_i$ returned in line 4 satisfies
\begin{align}
    \label{eqn:lemma_3_equiv_mod}
    \rfs(xs, y) \models E_i = \context[i][(xs\pp[x])/xs].
\end{align}
    From line 5 and 7, we know that {\sc SynthesizeOnlineProg} would return $\prog'$ if and only if it finds all $E_i$ satisfying \Cref{eqn:lemma_3_equiv_mod}.
    By \Cref{lem:decompose_soundness}, the completed program obtained by $\prog' = \prog'[E_1/h_1, \ldots, E_n/h_n]$ in line 6 satisfies $\rfs(xs, y) \models \fst(\prog'(y, x)) = \prog(xs')$.
\end{proof}
\begin{lemma} \label{lem:synthesizeonlineprog_model}
    If {\sc SynthesizeOnlineProg$(\prog, \rfs)$} returns $\prog'$ and $\init$ is a model of $\rfs[xs \mapsto \nil]$,
    then the online scheme $\scheme = (\init, \prog') \models \rfs$.
\end{lemma}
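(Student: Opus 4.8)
The plan is to discharge, one at a time, the two obligations in the definition of a model of an RFS (Definition~\ref{def:rfs-model}): namely that $\init \models \rfs$ and that $\prog'$ is inductive relative to $\rfs$. Once both hold, Definition~\ref{def:rfs-model} gives $\scheme = (\init, \prog') \models \rfs$ directly.

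The first condition is immediate from the hypotheses. Since $\init$ is a model of $\rfs[xs \mapsto \nil]$, and $\rfs[xs \mapsto \nil]$ is precisely the formula $\rfs(\nil, y)$ with $y$ instantiated to $\init$, we have that $\rfs(\nil, \init)$ evaluates to true. This is exactly condition (1) of Definition~\ref{def:rfs-model}, i.e. $\init \models \rfs$.

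For the second condition, I would unfold the definition of inductiveness relative to an RFS (Definition~\ref{def:inductive-rfs}): it amounts to showing the Hoare triple
\[
\hoare{\rfs(xs, y)}{y' := \prog'(y, x); \ xs' = xs \listappend [x]}{\rfs(xs', y')}
\]
is valid, i.e. that assuming $\rfs(xs, y)$, each component $y'_i$ of $\prog'(y, x)$ satisfies $y'_i = \rfs[y_i](xs \pp [x])$. The key observation is that {\sc SynthesizeOnlineProg} returns $\prog'$ only when every hole $h_i$ produced by {\sc Decompose} is filled by an expression $E_i$ returned by {\sc SynthesizeExpr}. By Theorem~\ref{thm:synthesizeexpr_soundness}, each such $E_i$ is equivalent to its specification modulo $\rfs$, namely $\rfs(xs, y) \models E_i = \context[h_i][(xs\pp[x])/xs]$. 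Feeding this into the {\sc Prog} case from the proof of Lemma~\ref{lem:decompose_soundness} yields, for \emph{each} output component, $\rfs(xs, y) \models E'_i = \rfs[y_i][(xs\pp[x])/xs]$. Since $\rfs[y_i][(xs\pp[x])/xs]$ is just $\rfs[y_i]$ evaluated on the extended list $xs'$, and since $y'_i = E'_i$, conjoining over $i$ gives exactly the postcondition $\rfs(xs', y')$, establishing the Hoare triple.

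The main obstacle is that Lemma~\ref{lem:synthesizeonlineprog_equivmod} is phrased only for the first component $\fst(\prog'(y,x))$, whereas inductiveness requires the per-component equivalence for all $n$ outputs simultaneously. Thus I cannot merely cite that lemma; instead I would extract the stronger per-component invariant that is already established inside the {\sc Prog} case of the proof of Lemma~\ref{lem:decompose_soundness}, where each $E'_i$ is shown equivalent to $\rfs[y_i]$ modulo the RFS. Making this strengthening explicit, and checking that the substitution $[(xs\pp[x])/xs]$ appearing in the equivalence-modulo-RFS statement coincides with evaluating the RFS on $xs' = xs \pp [x]$, is the only nontrivial bookkeeping. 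The two verified conditions of Definition~\ref{def:rfs-model} then combine to give $\scheme = (\init, \prog') \models \rfs$.
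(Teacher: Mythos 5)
Your proposal is correct and follows essentially the same route as the paper's own proof: discharge condition (1) of Definition~\ref{def:rfs-model} directly from the hypothesis on $\init$, then establish inductiveness by combining Theorem~\ref{thm:synthesizeexpr_soundness} with the per-component equivalences $\rfs(xs,y) \models E'_i = \rfs[y_i][(xs\pp[x])/xs]$ coming out of the decomposition argument. You are in fact slightly more careful than the paper in flagging that Lemma~\ref{lem:synthesizeonlineprog_equivmod} only covers the first output component and that the stronger per-component statement must be extracted from the \textsc{Prog} case of Lemma~\ref{lem:decompose_soundness}'s proof; the paper glosses over this by citing Theorem~\ref{thm:synthesizeexpr_soundness} for the remaining components.
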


\begin{proof}
    To show that $\scheme = (\init, \prog') \models \rfs$, we need to show both (1) $\init \models \rfs$ and (2) $\prog'$ is inductive relative to $\rfs$. (1) is easily obtained by soundness of \textsf{Model} invoked on line 3 of \Cref{algo:toplevel} that finds the model of $\rfs[xs \mapsto \nil]$. We proceed to (2).
    
    Suppose
\begin{align*}
    \rfs(xs, y) &= \bigwedge_{i=1}^n y_i = \rfs[y_i] \\
                &= y_1 = \prog(xs) \wedge y_2 = E_2(xs) \wedge \ldots \wedge y_n = E_n(xs)
\end{align*}
holds and let
\begin{align*}
    \rfs(xs', y') &= \bigwedge_{i=1}^n y'_i = \rfs[y'_i] \\
                  &= y'_1 = \prog(xs') \wedge y'_2 = E_2(xs') \wedge \ldots \wedge y'_n = E_n(xs').
\end{align*}
We show $\rfs(xs', y')$ holds where $xs' := xs\pp[x]$ and $y' := \prog'(y, x)$.

By \Cref{lem:synthesizeonlineprog_equivmod}, we have
\[
\rfs(xs, y) \models \fst(y') = \fst(\prog'(y, x)) = \prog(xs\pp[x]) = \prog(xs') = y'_1.
\]
Meanwhile, by \Cref{thm:synthesizeexpr_soundness}, we have $\rfs(xs, y) \models E'_i = E_i[xs\pp[x]/xs]$ for all $E'_i$ that {\sc SynthesizeExpr} returns making up $\prog'$. This implies that for every offline expression $E_i$ and its corresponding online expression $E'_i$,
\[
\rfs(xs, y) \models y'_i = E'_i = E_i(xs\pp[x]/xs) = E_i(xs'),
\]
which concludes (2) that
\[
\hoare{\rfs(xs, {y})}{\quad {y}' := \prog'(y, x); \ xs' = xs \listappend [x] \quad \quad  } {\rfs(xs', {y}')}.
\]
\end{proof}

Finally, we proceed to the proof of \Cref{thm:synthesize_soundness}.
\begin{proof}
From line 2-5 of \Cref{algo:toplevel}, {\sc Synthesize$(\prog)$} returns $\scheme = (\init, \prog')$ if and only if (1) it finds an $\init$ such that $\init$ is a model of $\rfs[xs \mapsto \nil]$ and (2) {\sc SynthesizeOnlineProg$(\prog, \rfs)$} returns $\prog'$.
By \Cref{lem:synthesizeonlineprog_model}, $\scheme = (\init, \prog') \models \rfs$.

We conclude that $\prog \simeq \scheme$ by \Cref{thm:method-sound}.
\end{proof}

\section{Solving Templates via Polynomial Interpolation} \label{sec:polyinterp}

In this section, we describe an optimization where we use polynomial interpolation to synthesize online expressions from the templates generated by {\sc MineExpressions}. We observed in many cases that if the online program uses an auxiliary parameter $n$ denoting the number of stream elements processed, then the templates produced by {\sc MineExpressions} can be transformed into equivalent online expressions by replacing the unknowns with polynomials over $n$. Intuitively, the reason this case occurs frequently when the online procedure needs to use $n$ is because {\sc MineExpression} generates the template by unrolling the input list $k$ times where $k$ is a constant. As such, the underlying QE engine is implicitly aware that $n = k$ so it often uses $k$, instead of $n$, to generate the expression; this results in an expression with several constants specific to the unrolling depth when in fact those constants should be replaced with expressions over $n$.

Our top level procedure procedure for solving templates via polynomial interpolation is presented in Algorithm \ref{algo:poly}. {\sc SolveTemplate} takes as input an offline expression $E$, an RFS $\rfs$, along with a template $T$, and returns an equivalent online expression $E'$ or $\bot$ indicating the procedure failed. Without loss of generality, we assume $T$ is of the form $\sum_{i=1}^m{??_ie_i}$ where $e_i$ is an expression over auxiliary variables. The algorithm assumes that the desired online expression can be synthesized by replacing each unknown $??_i$ with a univariate polynomial $P_i$ over $n$. As such, it first sets the candidate online expression $E'$ to $T$ (line 2) and then iteratively transforms $E'$ by replacing unknowns with synthesized polynomials.  Next, it calls $\textsc{SamplePoints}$, which returns a map $\mathcal{M}$ mapping each unknown $??_i$ to a set of $k$ distinct points $\mathcal{P}_i$ on $P_i$ (line 3). If $k$ points cannot be sampled, {\sc SolveTemplate} returns $\bot$ indicating the template could not be solved. For each unknown, {\sc SolveTemplate} uses polynomial interpolation to generate the polynomial $P_i$ (line 7) of lowest degree over $n$ which passes through all $k$ points. It then updates $E'$ by replacing $??_i$ with $P_i$ (line 8). After replacing all the unknowns with polyomials, {\sc SamplePoints} returns $E'$ if it is equivalent to $E$ modulo $\rfs$  and $\bot$ otherwise (lines 9-11).

\begin{figure}[!t]
\begin{algorithm}[H]
\caption{Template completion algorithm via polynomial interpolation}
\label{algo:poly}
\begin{algorithmic}[1]
\Procedure{\textsc{SolveTemplate}}{$E$, $\rfs$, $T$}
\vspace{2pt}
\Statex \textbf{Input:} An offline expression $E$
\Statex \textbf{Input:} An rfs $\rfs$
\Statex \textbf{Input: } A template $T := \sum_{i=1}^n{??_ie_i}$
\Statex \textbf{Output: } An online program $E'$ equivalent to $E$ modulo $\rfs$ or $\bot$
\State $E' \gets T$
\State $\mathcal{M} \gets \textsc{SamplePoints}(E, \rfs, T, $k$)$
\If{$\mathcal{M} = \bot$}
\State\Return $\bot$
\EndIf
\For{$(??_i, \mathcal{P}_i) \in \mathcal{M}$}
\State $P_i \gets \textsf{Interpolate}(\mathcal{P}_i)$
\State $E' \gets E'[??_i/P_i]$
\EndFor
\If{$\rfs \models E = E'$}
\State\Return $E'$
\EndIf
\State\Return $\bot$
\vspace{2pt}
\EndProcedure
\end{algorithmic}
\end{algorithm}
\end{figure}

\begin{figure}[!t]
\begin{algorithm}[H]
\caption{Samples points on the polynomials for each unknown}
\label{algo:samplepts}
\begin{algorithmic}[1]
\Procedure{\textsc{SamplePoints}}{$E$, $\rfs$, $T$, $k$}
\vspace{2pt}
\Statex \textbf{Input:} An offline expression $E$
\Statex \textbf{Input:} An rfs $\rfs$
\Statex \textbf{Input: } A template $T := \sum_{i=1}^n{??_ie_i}$
\Statex \textbf{Input: } A constant $k$ denoting the number of samples that should be collected for each $??_i$.
\Statex \textbf{Output: } A map $\mathcal{M}$ mapping each unknown $??_i$ to a set of $k$ points sampled on its corresponding polynomial or $\bot$ if points cannot be sampled.
\State $\mathcal{L} \gets \textsf{SampleLengths}(k)$
\State $m \gets \textsf{NumUnknowns}(T)$
\State $\mathcal{M} \gets \{\}$
\For{$l \in L$}
\State $X \gets \textsf{SampleLists}(m, l)$
\State $\Lambda \gets \textsf{GenEquations}(E, T, X, \rfs)$
\State $\psi \gets \textsf{Solve}(\Lambda)$
\If{$\psi = \bot$}
\State\Return $\bot$
\EndIf
\For{$??_i \in \textsf{Unknowns}(T)$}
\State $\mathcal{M}[??_i].\textsf{add}((l, \psi(??_i)))$
\EndFor
\EndFor
\State\Return $\mathcal{M}$
\vspace{2pt}
\EndProcedure
\end{algorithmic}
\end{algorithm}
\end{figure}

\bfpara{SamplePoints}. The heart of the above procedure is $\textsc{SamplePoints}$ (presented in Algorithm \ref{algo:samplepts}), which samples points on each $P_i$ \emph{even without knowing $P_i$}.  The key insight behind $\textsc{SamplePoints}$ is the following. Suppose the desired online expression $E'$ is equivalent to $E$ modulo RFS $\rfs$ and $E'$ can be derived from $T$ by replacing unknowns in $T$ with univariate polynomials $P_i$ over $n$. Then the following must be true: for any lists $xs$ and $xs'$ of constant length $k$, 
\begin{align}
E(xs) &= \sum_{i=1}^n{P_i(k)f_i(xs)} \\
E(xs') &= \sum_{i=1}^n{P_i(k)f_i(xs')}
\end{align}
where $f_i = e_i[v_i/\rfs[v_i]. \  \forall v_i \in \textsf{Vars}(e_i)]$.
Since $P_i$ only depends on $n$, its value will not change for different lists of the same length. Thus, given a template $T$ with $m$ unknowns, we can recover $P_i(k)$ by sampling $m$ distinct lists $xs_1, \dots, xs_m$ of length $k$ and solving the following system of linear equations:

\begin{equation}\label{eq:system}
\begin{bmatrix}
f_1(xs_1) & f_2(xs_1) & \dots & f_n(xs_1) \\
f_1(xs_2) & f_2(xs_2) & \dots & f_n(xs_2) \\
\dots  & \dots  & \dots  & \dots \\
f_1(xs_m) & f_2(xs_m) & \dots & f_n(xs_m)
\end{bmatrix}
\begin{bmatrix}
??_1 \\ ??_2 \\ \dots \\ ??_m 
\end{bmatrix}
=
\begin{bmatrix}
E(xs_1) \\ E(xs_2) \\ \dots \\ E(xs_m)
\end{bmatrix}
\end{equation}

With this background explained, we can now describe Algorithm \ref{algo:samplepts}. It starts by sampling $k$ distinct lengths (line 2) and for each length $l$ performs the following steps:

\begin{itemize}
    \item Sample $m$ distinct lists of length $l$ (line 6)
    \item Generate a system of equations like the one presented in Equation \ref{eq:system} (line 7)
    \item Solve the system of equations to obtain $P_i(l)$ or return $\bot$ if the equations have no solution (8-10)
    \item Update the map $\mathcal{M}$ to add the point $(l, P_i(l))$ for each unknown (lines 11-12).
\end{itemize}

\tool chooses $k$ to be 11 which means the polynomials generated by this procedure can be at most degree 11. In our experience, we have not encountered a case where the desired polynomial was of degree larger than four.

\end{document}